\newtheorem{theorem}{Theorem}
\newtheorem{lem}{Lemma}
\tiny\color{gray},     
\newcommand{\Ab}{{\bf A}\xspace}
\newcommand{\xb}{{\bf x}\xspace}
\newcommand{\algrule}[1][.2pt]{\par\vskip.5\baselineskip\hrule height #1\par\vskip.5\baselineskip}
\begin{document}

\title{Fully-Automated Code Generation for Efficient Computation of Sparse Matrix Permanents on GPUs}

\author{Deniz~Elbek
        and~Kamer~Kaya
\thanks{Elbek (deniz.elbek@sabanciuniv.edu) and Kaya (kaya@sabanciuniv.edu) are with the Department
of Computer Science and Engineering at the Faculty of Engineering and Natural Sciences, Sabancı University, Istanbul, Turkey.}
}


\maketitle

\begin{abstract} Registers are the fastest memory components within the GPU's complex memory hierarchy, accessed by names rather than addresses. They are managed entirely by the compiler through a process called \textit{register allocation}, during which the compiler attempts to cache predictable data from thread-local memory into thread-private registers. Computing the permanent of a sparse matrix poses a challenge for compilers, as optimizing this process is hindered by the unpredictable distribution of nonzero elements, which only become known at runtime. In this work, we employ \textit{fully-automated code generation} to address this, producing highly optimized kernels tailored to the matrix's sparsity pattern. State-of-the-art permanent computation algorithms require each thread to store a private array, denoted $\xb$, of size $n$. We first propose a technique that fully stores these arrays in registers, with \textit{inclusion} and \textit{exclusion} kernels generated for each column. To \textit{minimize control divergence} and reduce the number of unique kernels within a warp, we exploit the internal structure of Gray codes, which are also used in the state-of-the-art algorithm. Our second technique reduces register pressure by utilizing both registers and global memory and introduces a matrix ordering and partitioning strategy for greater efficiency. On synthetic matrices, this approach achieves a $31\times$ speedup over state-of-the-art CPU implementations on 112 cores, and an $8\times$ speedup compared to our traditional GPU implementation. For real-world matrices, these speedups are $24.9\times$ and $4.9\times$. \end{abstract}

\begin{IEEEkeywords}
Code Generation, Sparse Matrix Permanent, Gray-codes, Massively Parallel Algorithms, GPU Architecture.
\end{IEEEkeywords}

\IEEEpeerreviewmaketitle

\section{Introduction}
\IEEEPARstart{T}{he} \textit{permanent} of a matrix is used to capture its characteristic properties and is applied in various domains including quantum computing~\cite{aaronson11, LUNDOW2022110990,10.1093/nsr/nwy079,Brod15}, complexity theory~\cite{Narahara2013ApplicationOP, PhysRevE.77.016706}, and probability and statistics~\cite{kilic2007, balakrishnan2007, Merschen11}. It has also been employed in graph theory~\cite{lovász2009matching, 253457, DUFOSSE2022130}, in which the data structures at hand are often sparse. The permanent belongs to the \textit{immanant} family~\cite{LittlewoodRichardson1934, Littlewood1950}, which also includes the determinant. While the permanent is similar to the determinant, its formula lacks the alternating sign in the summation, making its computation immensely challenging. The permanent of an \(n \times n\) matrix \(\mathbf{A} = [a_{ij}]\) is:
\begin{equation}
\text{perm}(\mathbf{A}) = \sum_{\sigma \in S_n} \prod_{i=1}^{n} a_{i,\sigma(i)}
\label{eq:naive}
\end{equation}
where the outer sum \(\sum_{\sigma \in S_n}\) runs over all permutations \(\sigma\) of the set \(S_n  = \{1, 2, \dots, n\}\), and the inner product \(\prod_{i=1}^{n} a_{i,\sigma(i)}\) multiplies the matrix entries \(a_{i,\sigma(i)}\). This formula has a time complexity of \(\Theta(n \times n!)\), making it infeasible to use even for a small, e.g., \(20 \times 20\), matrix. 

The computational complexity of the permanent is shown to be \#P-complete~\cite{valiant79a}, with the most efficient known algorithm having a time complexity of $O(2^{n-1}n)$ for an $n \times n$ matrix~\cite{ryser63,nijenhuis78}. Recently, HPC clusters and supercomputers have been used in practice to compute permanents for boson sampling; Wu~et~al. reported that the permanent of a $48 \times 48$ dense matrix is computed on $8192$ nodes of the Tianhe-2 supercomputer ($196,608$ cores) in $4500$ seconds. They also tested a hybrid approach using both CPU cores and two MIC co-processors/node with 256 nodes to compute the permanent of a $40 \times 40$ matrix in 132 seconds~\cite{10.1093/nsr/nwy079}. In another study, Lundow and Markstr\"{o}m used a cluster to compute the permanent of a record-breaking $54 \times 54$ matrix in 7103 core/days~\cite{LUNDOW2022110990}. As discussed in this study, and as well as in~\cite{Brod15} by Brod, low-depth Boson sampling set-ups lead to sparse matrices.\looseness=-1

In this work, we address the problem of computing permanents for sparse matrices using GPUs. Sparse matrix algorithms are challenging in efficiently utilizing GPU memory hierarchies, especially registers, due to the unpredictable distribution of nonzeros. Instead, our approach leverages the nonzero distribution to generate matrix-specific codes. The contributions of this work are summarized below:\looseness=-1
\begin{itemize}[leftmargin=*]
\item By leveraging matrix-specific, fully-automated code generation, we propose optimized GPU kernels that make better use of memory resources while minimizing control divergence through the properties of Gray codes. 
\item Additionally, we introduce a hybrid-memory approach that alleviates the pressure on registers by selectively using global memory, which, in combination with matrix ordering and partitioning strategies, leads to significant performance improvements. 
\item Our techniques yield $31\times$ and $8\times$ speedup compared to state-of-the-art CPU/GPU implementations, respectively, on synthetic matrices. On real-life sparse matrices, these values are $24.9\times$ and $4.9\times$, respectively. 
\end{itemize}

The rest of this paper is organized as follows: In Section II, we review the necessary background and notation, followed by a detailed description of our code generation approach in Section III. Section IV introduces our strategy for minimizing control divergence, while Section V presents the hybrid-memory code generation technique. Experimental results demonstrating the performance benefits of our approach are presented in Section VI, and related work is discussed in Section VII. Finally, Section VIII concludes the paper and outlines future research directions.

\section{Background and Notation}
Ryser~\cite{ryser63} modified the permanent equation using the inclusion-exclusion principle, enabling~\eqref{eq:naive} to be rewritten as:
\begin{equation}
\text{perm}(\mathbf{A}) = (-1)^n \sum_{S \subseteq \{1, 2, \dots, n\}} (-1)^{|S|} \prod_{i=1}^{n} \sum_{j \in S} a_{ij}
\label{eq:ryser}
\end{equation}
where the outer sum runs over all subsets \(S \subseteq \{1, 2, \dots, n\}\), \(|S|\) is the cardinality of the subset \(S\), and the inner product multiplies the sums for all rows \(i\), i.e., \(\sum_{j \in S} a_{ij}\) over the columns \(j \in S\). This equation has a time complexity of \(\Theta(2^{n} \times n^2)\). Nijenhuis and Wilf~\cite{nijenhuis78} further reduced the complexity to \(\Theta(2^{n-1} \times n)\) by going only over the subsets \(S \subseteq \{1, 2, \dots, n-1\}\) and processing the columns in Gray code order.\looseness=-1 

Gray code is a sequence where each successive binary code differs from the previous one by exactly one bit. This allows the algorithm to process only a single column and update the sums. An array \(\mathbf{x}\) is used to store the sums where the \(i\)-th element is the sum of the elements in the \(i\)-th row that correspond to the columns currently included in the subset $S$. When a column is added to or removed from $S$~(as indicated by the changed bit in the Gray code), only the affected column's contribution to the relevant row sums needs to be updated in \(\mathbf{x}\). This efficient update process removes the need for recalculating all the summations \(\sum_{j \in S} a_{ij}\) in \eqref{eq:ryser}, although it requires storing an \(\mathbf{x}\) array of size \(n\). We refer the reader to~\cite{nijenhuis78} for further details on the Nijenhuis-Wilf variant. 

For sparse matrices, the data structures {\em Compressed Sparse Row}~(CSR) and {\em Compressed Sparse Column}~(CSC) have been used for efficient permanent computation~\cite{kaya19}. For an $m \times n$ matrix with $\nu$ nonzeros, the CSR can be defined as follows:
\begin{itemize}[leftmargin=*]
    \item {\tt rptrs[$\cdot]$} is an integer array of size $m + 1$. For $0 \leq i < m$, {\tt rptrs[i]} is the location of the first entry of $i$th row in {\tt cids}. The first element is {\tt rptrs[0]} $= 0$, and the last element is {\tt rptrs[m]} $ = \nu$. Hence, all the column indices of row $i$ are stored between {\tt cids[rptrs[i]]} and {\tt cids[rptrs[i + 1]] - 1}. 
    \item {\tt cids[$\cdot$]} is an integer array storing the column IDs for each nonzero in row-major ordering. 
    \item Each nonzero value $val$ is stored in an array named {\tt rvals} in the same order of nonzeros inside {\tt cids}.
\end{itemize}
The CSC is simply the CSR of the transposed matrix where the arrays are {\tt cptrs}, {\tt rids} and {\tt cvals}, respectively. Using these sparse formats, Algorithm~\ref{alg:sparyser} implements the Nijenhuis-Wilf variant of~\eqref{eq:ryser} via Gray codes. In the algorithm, {\sc{Gray}}$_{g}$ denotes the $g$th Gray code and hence, $\log_2$({\sc{Gray}}$_{g}$ $\oplus$ {\sc{Gray}}$_{g-1}$) extracts the index of the changed bit, i.e., the column ID to process, for each iteration $1 \leq g < 2^{n-1}$. The {\bf for} loop at line~\ref{ln:outer} corresponds to the outer sum in~\eqref{eq:ryser} where the subsets $S \subseteq \{1,2, \ldots, n-1\}$, which denote the column IDs used to compute the row sums stored in \xb, are processed in Gray code order following the optimizations from Nijenhuis-Wilf~\cite{nijenhuis78}.  The sequential implementation of this variant with CRS/CCS structures is given in Algorithm~\ref{alg:sparyser}.

\begin{algorithm}[htbp]
\setstretch{1.08}
\caption{: {\sc SparsePerman}} 
\label{alg:sparyser}
\begin{flushleft}
\small
\textbf{Input:} ($rptrs, cids, rvals$) \algorithmiccomment{\small CSR of $n \times n$ sparse matrix $\Ab$} \\ \hspace*{7.2ex}($cptrs, rids, cvals$) \algorithmiccomment{CSC of $n \times n$ sparse matrix $\Ab$}\\
\textbf{Output:} {\tt perm} ($\Ab$) \algorithmiccomment{\small Permanent of $\Ab$}
\end{flushleft}

\algrule
\begin{algorithmic}[1]

\For{$i = 1$ to $n$} \algorithmiccomment{\small Nijenhuis-Wilf $\xb$ init.~\cite{nijenhuis78}} \label{ln:xinitstart}	
	\State{$sum = 0$}  
	\For{$ptr = rptrs[i]$ to ($rptrs[i+1] - 1$)}  
		\State{$sum \leftarrow sum + rvals[ptr]$}
	\EndFor
	\State{$\xb[i] \leftarrow rvals[rptrs[i+1] - 1]- \frac{sum}{2}$}  \label{ln:xinitend}	
\EndFor

 \algrule
	\State{$p \leftarrow \prod_{i = 1}^n{\xb[i]}$} \algorithmiccomment{\small Nijenhuis-Wilf result init.~\cite{nijenhuis78}}	

 \algrule
 \For {$g = 1$ to $2^{n-1} - 1$} \label{ln:outer}
 
      \hrulefill
   \State{$\blacktriangleright$ \footnotesize Find the changing column and the direction of change (+ or -)}
	\State{$j \leftarrow \log_2$({\sc{Gray}}$_{g}$ $\oplus$ {\sc{Gray}}$_{g-1}$)}\label{ln2:j}
	\State{$s \leftarrow 2 \times ${\sc{Gray}}$_{g}$[$j$] $-1$}\label{ln:s}

\hrulefill
 	  \State{$\blacktriangleright$ \footnotesize Update row sums by adding/removing column $j$ to/from $S$}

	  \For {$ptr = cptrs[j]$ to ($cptrs[j + 1] - 1$)}\label{ln:sumfor}
	        \State{$row \leftarrow rids[ptr]$}
	        \State{$val \leftarrow cvals[ptr]$}

	        \State{$\xb[row] \leftarrow \xb[row] + (s \times val)$}\label{ln:xupdate}
	 \EndFor
  
  \hrulefill
 	  \State{$\blacktriangleright$ \footnotesize Compute the inner product (of row sums)}

	 	\State{$prod \leftarrow 1$}\label{ln:reduce1}  	 
	 	\For {$i =  1$ to $n$} 
	        		\State{$prod \leftarrow prod \times \xb[i]$}   \label{ln:reduce2}  	    
	 	\EndFor
   
     \hrulefill
    	  \State{$\blacktriangleright$ \footnotesize Perform an outer sum iteration and update the result variable}

		\State{$p \leftarrow p + \left((-1)^g \times prod\right)$}
\EndFor 

\algrule
\State{\Return{$p \times (4 \times (n \bmod 2) - 2)$}} 
 \end{algorithmic}
\end{algorithm}

Figure~\ref{fig:gray_code_change} shows a toy $6 \times 6$ matrix that will be used throughout the rest of the paper as a running example. Algorithm~\ref{alg:sparyser} first initializes the $\xb$ vector corresponding to the sums of the rows for the Nijenhuis-Wilf variant~(lines 1--5). The cost of this part is $\mathcal{O}$($n + \nu$). Then, the main, compute-heavy body of the algorithm~(lines 7--21) starts. On the right of Fig.~\ref{fig:gray_code_change}, the ($s$, $j$)  pair for the first 8 iterations are given along with one more for column $j = 4$. For instance, the Gray code changes from $000000$ to $000001$ in the first iteration. Hence, the first bit corresponding to the first column, i.e., $j = 0$, changes, and furthermore, it changes from $0 \rightarrow 1$ which implies column {\em inclusion}, i,e, $s = 1$. Hence, the operations $\xb[0] \mathrel{+}= 11.6$, $\xb[2] \mathrel{+}= 2.6$, $\xb[3] \mathrel{+}= 1.8$, and $\xb[5] \mathrel{+}= 9.9$ are performed, i.e., column 0 is included on top of $\xb$~(lines 12--15). Column 0 will again be used in the third iteration, but this time with $s = -1$, since the Gray code changes from $000011$ to $000010$ and the changed bit changes from $1 \rightarrow 0$. After $\xb$ is updated, its contribution is computed by multiplying its entries~(lines 17--19) and added on top of the result variable $p$~(line 21). This process is repeated for $2^{n-1} - 1$ iterations. 

\begin{figure}[htbp]
    \centering
    \includegraphics[width=0.42\textwidth]{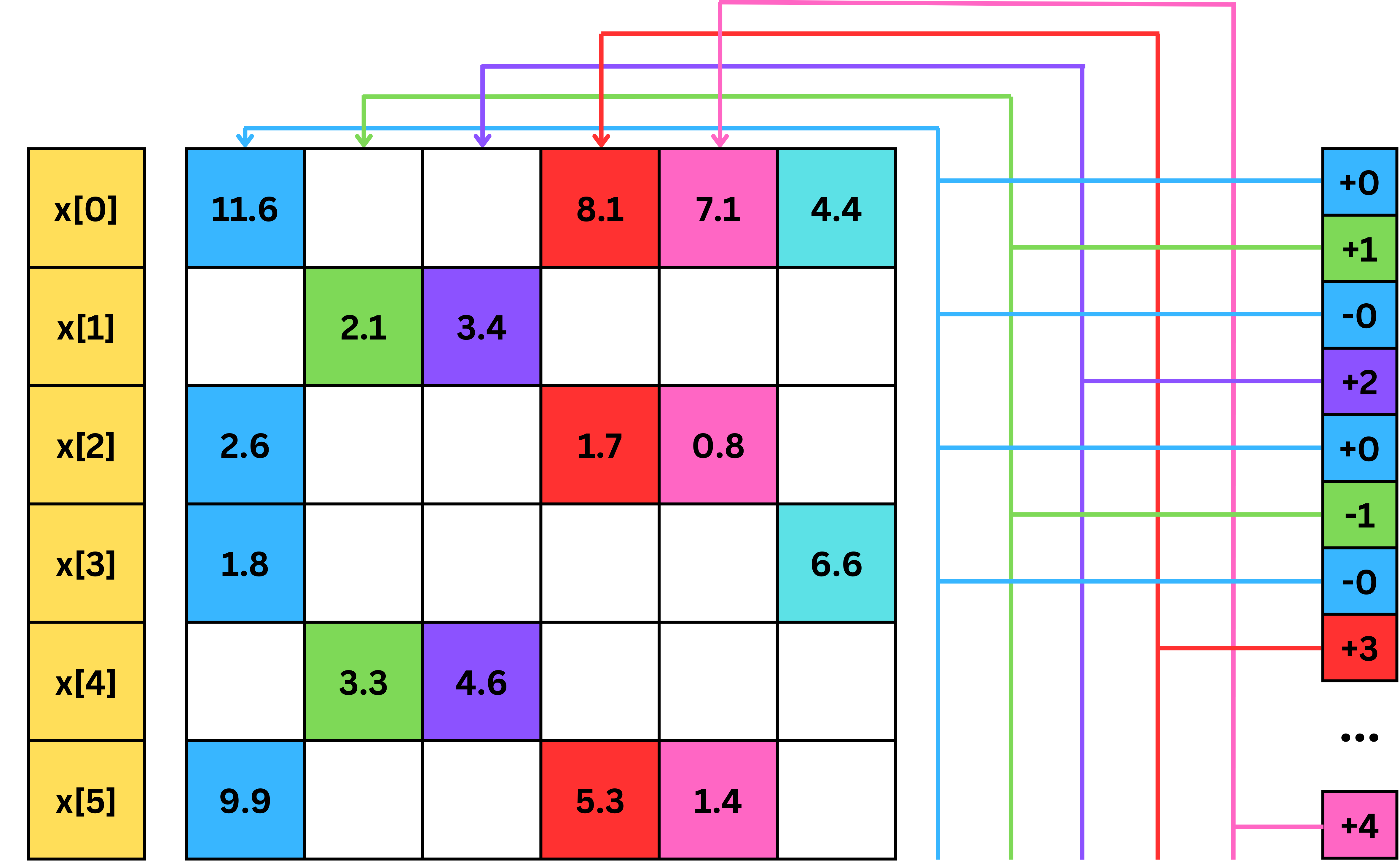}
    \caption{\small A 6x6 matrix (left), with a permanent value 54531.03, and the Gray-code change bit sequence (right) in the iterations of Algorithm~\ref{alg:sparyser} where each prefix `$+$`, indicating column inclusion, or `$-$`, indicating exclusion, correspond to the $s$ value, 1 or -1, at line 10 of Alg.~\ref{alg:sparyser}. Each number on the right is the changing bit index corresponding to the $j$ value at line 9 of Alg.~\ref{alg:sparyser}.}
    \label{fig:gray_code_change}
\end{figure}

\subsection{Parallel Permanent Computation}\label{subsec:parper}
When used for permanent computations, both~\eqref{eq:naive} and~\eqref{eq:ryser} are pleasingly parallelizable since different threads can independently execute the outer sum iterations. The Nijenhuis-Wilf variant, however, incurs a dependency between two adjacent iterations, as the \(\mathbf{x}\) array, which carries row sums, needs to be updated using the values in column $j$ where $j$ is the changed Gray code bit. A parallelization strategy with $\tau$ threads from~\cite{kaya19} assigns chunks of $\Delta = \lceil \frac{2^{n-1} - 1}{\tau} \rceil$ consecutive iterations to each thread~(may not be exact for the last thread). Hence, each thread $0 \leq t < \tau$ starts from the iteration $g^{start}_t = t \Delta + 1$ and stops at $\min(2^{n-1} - 1, g^{start}_t + \Delta - 1)$. With this approach, each thread needs to keep a {\em private} $\xb_t$ to accumulate the row sums where each $\xb_t$ is initialized from $\xb$ computed between lines~\ref{ln:xinitstart}--\ref{ln:xinitend} of Alg.~\ref{alg:sparyser} and is updated by adding the columns corresponding to the set bits in {\sc{Gray}}$_{g^{start}_t - 1}$. To reduce the synchronization overhead, each thread also keeps a partial permanent value which is added to a global variable after the iterations are completed.\looseness=-1  

\section{Computing Sparse Matrix Permanents on GPUs}\label{sec:gpuper}

While computing sparse matrix permanents is well-studied on CPUs, we are not aware of a study focusing on GPUs. Indeed, using the approach mentioned above, one can distribute the chunks to the GPU threads in a straightforward way. Considering the number of threads is much larger on GPUs~(compared to CPUs), the main overhead arises from the need to store a private \(\mathbf{x}\) array for each thread. 
As seen in Alg.~\ref{alg:sparyser}, $\Ab$~(in sparse format) and $\xb$ are the two main data structures accessed and/or modified during the execution. In this work, our focus will be data/memory-centric since the {\em arithmetic intensity}, i.e., the ratio of the number of FLOPs to the number of memory accesses to compute the permanent, is small which makes the kernel memory bound.

GPUs possess a diverse and hierarchical memory subsystem, where the choice of data storage significantly impacts the computation's execution time. Although both $\Ab$ and \(\mathbf{x}\) could be stored in {\em global memory}, this would incur a performance bottleneck. 
Instead, when {\em shared memory} is used, the performance is expected to increase by reducing data access times though at the cost of fewer threads being launched due to the existence of thread-private $\xb$ arrays stored on the limited shared memory resources. 
Table~\ref{tab:shared_vs_global} reports the execution time, the thread count recommended by the CUDA Occupancy API \cite{NVIDIA}, and the arithmetic intensity (w.r.t. to the number of bytes transferred from {\em global} memory of two different kernels which only differ in where they store \(\mathbf{x}\) on the GPU. As Table~\ref{tab:shared_vs_global} shows, the kernel \(\mathbf{x}_{shared}\) is $12.5\times$ faster than \(\mathbf{x}_{global}\), even the latter is implemented to perform coalesced global memory accesses and the former is launched with less number of, in fact $4\times$ less, threads. Nevertheless, being much faster, \(\mathbf{x}_{shared}\) is just an adaptation of the literature to GPU, and we consider it as one of our baselines.\looseness=-1 

\begin{table}[h!]
\setstretch{1.1}

\small
\centering
\begin{tabular}{|l|r|r|}
\hline
\textbf{Metric}            & $\xb_{shared}$ & $\xb_{global}$  \\ \hline
Execution time~(in sec)    & 1476.3           & 18449.4         \\ \hline
Thread count~(in the grid)               & 20480             & 122880           \\ \hline
Arithmetic intensity       & \multirow{2}{*}{$1.07 \times 10^8$} & \multirow{2}{*}{$1.25 \times 10^{-1}$} \\ 
(w.r.t. global memory)        & & \\ \hline

Speedup (w.r.t. $\xb_{global}$)                    & 12.5$\times$ & 1$\times$            \\ \hline
\end{tabular}
\caption{\small{Comparison of multiple metrics between two kernels, $\xb_{shared}$ and $\xb_{global}$, which store $\xb$ on shared and global memory, respectively, while computing the permanent of a random $45 \times 45$ sparse matrix on NVIDIA GV100.}}
\label{tab:shared_vs_global}
\end{table}

A powerful yet often underrated method for optimizing GPU performance is to fully exploit the registers, the fastest memory units available. A GPU has a huge register file that can provide almost immediate data access. For instance, an A100 GPU has 108 streaming multiprocessors~(SM), each accommodating 65,536 32-bit registers resulting in a capacity of around 28MB. Although registers are usually the designated temporary storage, in this work, we utilize them as permanent storage units which is especially challenging for sparse data. The data stored in a register cannot be accessed by other threads—except when the threads within a warp shuffle data. Consequently, any data stored in registers must be duplicated for every thread using them. For instance, using registers to store $\Ab$ results in duplications which will be extremely inefficient. However, for thread-private data such as $\xb_t$ for thread $t$, using registers can be promising.\looseness=-1 

\begin{figure*}[htbp]
     \centering
     \fbox{
     \begin{subfigure}[b]{0.31\textwidth}
         \centering
         \includegraphics[width=\textwidth]{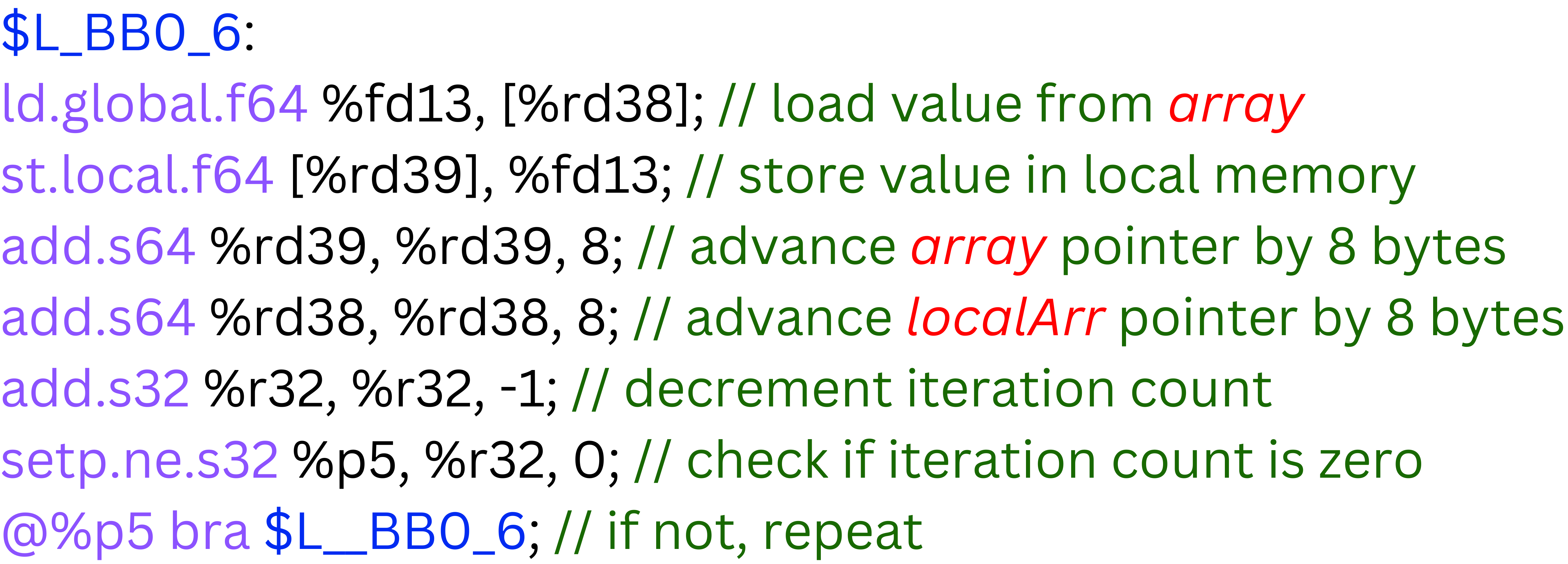}
         \caption{\small The copy loop}
         \label{fig:copyloop}
     \end{subfigure}
     \begin{subfigure}[b]{0.31\textwidth}
         \centering
         \includegraphics[width=\textwidth]{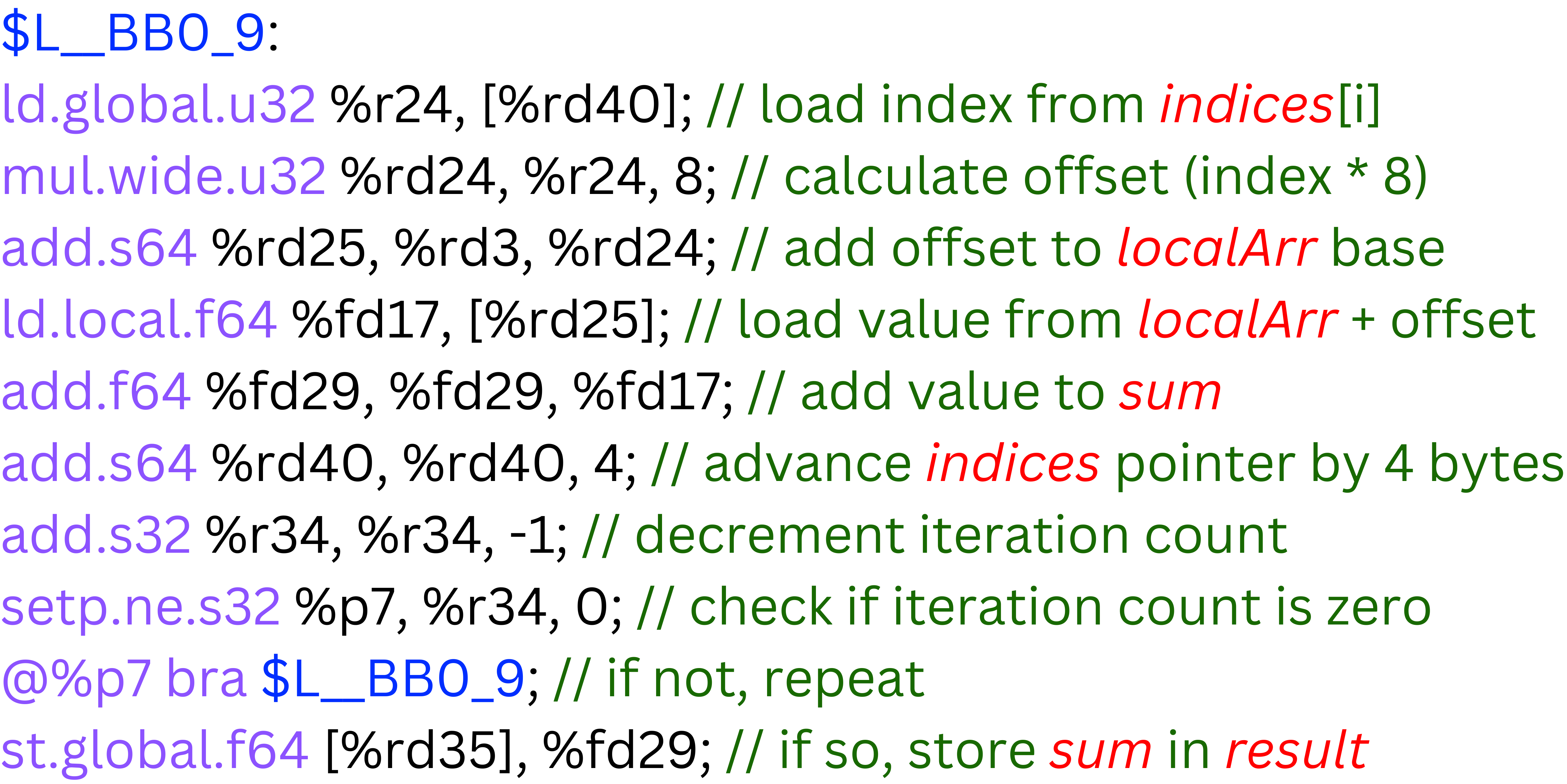}
         \caption{\small The reduce loop}
         \label{fig:reduceloop}
     \end{subfigure}
     }
     \fbox{
     \begin{subfigure}[b]{0.298\textwidth}
         \centering
         \includegraphics[width=\textwidth]{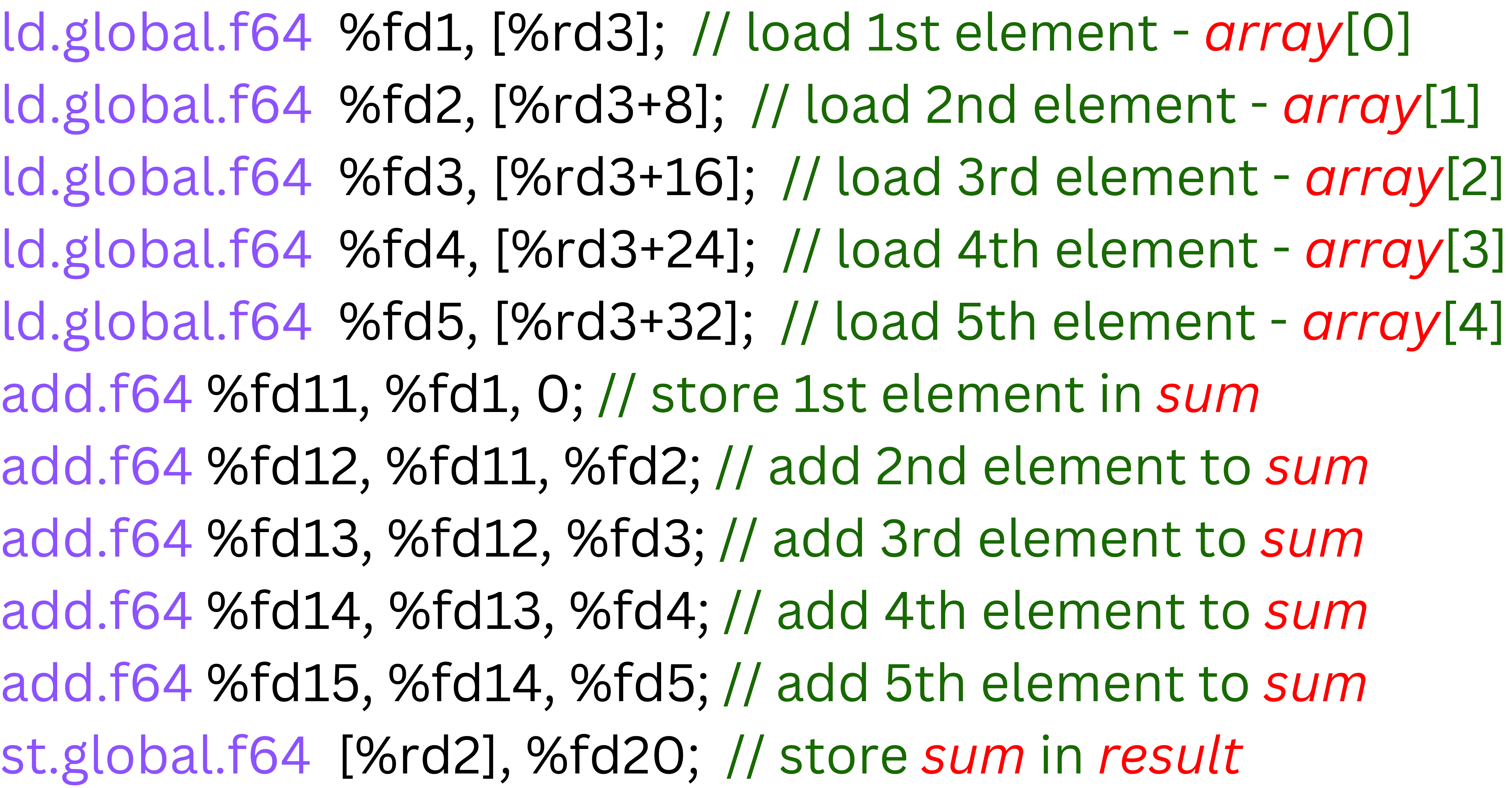}
         \caption{\small Copy/reduce loops with no indices}
         \label{fig:copyreduceloop}
     \end{subfigure}
     }
        \caption{\small Simplified PTX codes generated for Listing~\ref{lst:locReduce}. In the last version, the {\tt indices} parameter is assumed to be not presented and the reduction is performed sequentially over {\tt localArr}.  }
        \label{fig:allcodes}
\end{figure*}

\subsection{Register Allocation}
The CUDA runtime allocates local memory for each thread upon kernel launch. Access to this region is expensive, as it is logically located in the same region as the global memory. However, the data in this region is expected to be frequently referenced throughout the kernel execution. This contradiction leads the compiler to perform an important optimization called \textit{Register Allocation}, during which the data located in the thread-local memory is cached into thread-private registers. This allows threads to have fast access to the data they use most frequently. However, as dictated by the CUDA Programming Guide~\cite{NVIDIA}, two conditions must be met for this optimization: (1) data should fit into registers, and (2) data, e.g., the register to be accessed, should be discoverable at compile time.\looseness=-1 

For a single thread, the maximum number of registers is limited to 255. Therefore, the maximum data that can be cached into registers is around 1KB. The matrices the literature focuses on for Boson sampling have $n \leq 60$~\cite{LUNDOW2022110990}, and even in 64-bit precision, the maximum number of 32-bit registers per thread is not exceeded. Furthermore, when $n$ is larger, partial register-based storage of thread-private $\xb$ arrays is still possible. Hence, the first condition is met. Although the second condition is automatically met for dense matrices, for sparse matrices, the column-wise nonzero structure, {\tt \small cptrs} and {\tt \small rids}, is a part of the input and is not known at compile time. Hence, the indices of the $\xb$ entries accessed for the selected column~(at line 15 of Alg.~\ref{alg:sparyser}) are not known before the matrix is given.\looseness=-1

\begin{center}
\begin{tabular}{c}
\begin{lstlisting}[caption={\small A toy C++ code snippet for reduction.}, linewidth=8cm, numbers=left,
label={lst:locReduce}]
#define ARR_SIZE 5
__global__ void reduce( double* array, 
                        double* result, 
                        unsigned* indices) {
    double localArr[ARR_SIZE];
    for (int i = 0; i < size; ++i) {
        localArr[i] = array[i];
    }
    double sum = 0;
    for (int i = 0; i < ARR_SIZE; ++i) {
        sum += localArr[indices[i]];
    }
    *result = sum;
}
\end{lstlisting}
\end{tabular}
\end{center}

As an example of the problem mentioned above, consider the {\tt \small reduce} kernel provided in Listing~\ref{lst:locReduce} performing a reduction of 5 elements from {\tt \small localArr}. The elements are located by {\tt \small indices} provided as a parameter and {\em unknown} before the execution. Figs.~\ref{fig:allcodes}.(a) and~\ref{fig:allcodes}.(b) shows, respectively, the simplified PTX~(Parallel Thread Execution) codes for the copy~(lines 6--8) and the reduce~(lines 10--12) loops. Since the values in {\tt \small indices} are not known, the second condition mentioned above is not met and {\tt \small localArr} is not stored inside the registers. However, when line 11 of Listing~\ref{lst:locReduce} is modified as {\tt \small sum += localArr[i]}, i.e., when {\tt \small indices} are not used, the PTX code given by Fig.~\ref{fig:allcodes}.(c) is generated and {\tt \small localArr} is fully stored inside the registers, as the registers to be accessed at runtime are exactly known at compile-time.\looseness=-1

\subsection{Automated Code Generation for Matrix Permanents}
The Gray-code-order Algorithm~\ref{alg:sparyser} follows implies that for each iteration, the subset $S$ of columns differs from the previous set by the inclusion/exclusion of only a single column where $s$ at line~\ref{ln:s} of Alg.~\ref{alg:sparyser} being 1/-1 handles these two separate cases. When the matrix is known, the registers each column modifies, and the change direction for each iteration, i.e., addition or subtraction, become {\em known} at compile time which yields the possibility of storing $\mathbf{x}$ inside registers by meeting the second condition for Register Allocation.

\begin{center}
\begin{tabular}{c}
\begin{lstlisting}[caption={\small Inclusion kernel for column 0.}, numbers=left, linewidth=8.1cm, label={lst:colKernel}]
#define C double
__device__ __inline__ void c0_inc(C& product, 
    C& reg0,          const C& reg1, 
    C& reg2, C& reg3, const C& reg4, C& reg5) {
    reg0 += 11.600000;
    reg2 += 2.600000;
    reg3 += 1.800000;
    reg5 += 9.900000;

    prodReduce(product, reg0, reg1, 
                        reg2, reg3, reg4, reg5);
}
\end{lstlisting}
\end{tabular}
\end{center}

In this work, we first generate two {\em {\tt \small \_\_device\_\_} kernels} for each column, except the last one~(which is omitted due to Nijenhuis-Wilf optimization). These kernels are called {\em inclusion kernels}, handling $s = 1$, and {\em exclusion kernels}, handling $s = -1$, and are used to update the row sums in $\mathbf{x}$. As an example, the inclusion kernel for the column $j = 0$ of the toy matrix in Fig.~\ref{fig:gray_code_change} is given in Listing~\ref{lst:colKernel}. Each generated kernel gets all the registers as parameters whereas the ones that are not updated are stated as {\color{blue}{\tt \small const}}. The only difference between an inclusion and the corresponding exclusion kernel for the same column is the sign of the update; i.e., $\mathrel{-}=$ is used instead of $\mathrel{+}=$. 

\begin{center}
\begin{tabular}{c}
\begin{lstlisting}[caption={\small Reduction over the $\mathbf{x}$ array.}, numbers=left, linewidth=8.3cm, label={lst:prodReduce}]
#define C double
__device__ __inline__ void prodReduce(C& product, const C& reg0, const C& reg1, const C& reg2, const C& reg3, const C& reg4, const C& reg5) 
{
    product *= reg0;
    product *= reg1;
    product *= reg2;
    product *= reg3;
    product *= reg4;
    product *= reg5;
}
\end{lstlisting}
\end{tabular}
\end{center}

For all the kernels generated, the compiler can produce the PTX with Register Allocation optimization, as the exact registers to read/write can be determined at compile time. Although this is not necessary to store $\mathbf{x}$ on registers, the matrix entries can also be injected into the kernels as in Listing~\ref{lst:colKernel}, which removes the need to make access for $\mathbf{A}$ during execution. Once the updates on $\mathbf{x}$ are done, each kernel performs a product reduction over the elements of $\mathbf{x}$. This corresponds to lines~\ref{ln:reduce1}–\ref{ln:reduce2} in Alg.~\ref{alg:sparyser}. While generating {\tt \small prodReduce}, given in Listing~\ref{lst:prodReduce}, we manually unrolled the loop. However, this could be done by the CUDA compiler with $n$ being constant.

There may be three potential problems of this approach while leveraging GPU-based parallelism:
\begin{itemize}[leftmargin=*]
\item A sparse matrix usually has an irregular nonzero distribution in which the rows/columns usually have different numbers of nonzeros. Hence, the row-sum updates in between lines~\ref{ln:sumfor}--\ref{ln:xupdate} of Alg.~\ref{alg:sparyser} may incur a load imbalance among the threads in a warp if they process, i.e., include/exclude, different columns since the number of {\bf for} loop iterations is equal to the number of nonzeros at column. This problem also exists in our generated kernels since the proposed techniques have not ({\em yet}) fine-tuned the assignment of iteration chunks to threads. 

\item The kernel generation has another, probably a more drastic problem; {\em control divergence}. Instead of executing the same instructions (albeit, a different number of times) in the baseline implementation, the threads now call different device kernels generated by our technique. Furthermore, even if the index of the changed bit, i.e., the ID $j$ of the included/excluded column, is the same for two threads, the direction of change, i.e., $s$, can be different in which case one thread can call the {\em inclusion} kernel whereas the other can call the {\em exclusion} one.\looseness=-1 

\item Although $n$, the matrix dimension is practically small enough to fit a thread private $\mathbf{x}$ array to 255 registers, the total number of registers is still limited to 65536 per SM. Considering the register usage also for other code/data, the proposed technique can drastically reduce the number of threads that can concurrently run on an SM. This can incur a significant performance bottleneck for kernel generation.\looseness=-1  
\end{itemize}

The next section focuses on the first two problems whereas Section~\ref{sec:hybrid} focuses on the last one. 

\section{Minimizing Control Divergence}\label{sec:diverge}
The proposed kernel generation technique generates $2 \times (n - 1)$ kernels and makes each thread call one for the row-sum update in a single iteration. With {\em Single Instruction, Multiple Threads}~(SIMT), for efficiency, the threads within a warp need to run the same instruction in parallel. 
Otherwise, when they run different kernels, the instruction within those kernels differs, which leads to {\em control divergence}. As described before, the kernel to run depends on the changed bit $j$ in the Gray code and the change direction $s$. Assuming the matrix is $5 \times 5$ there will be $2^{5-1} - 1 = 15$ iterations and the changed-bit sequence is [0, 1, 0, 2, 0, 1, 0, 3, 0, 1, 0, 2, 0, 1, 0]. Note that for each column ID, the odd indexed appearances, e.g., 1st, 3rd, 5th etc., correspond to a column inclusion and the even indexed appearances correspond to an exclusion. With 5 threads and hence chunks containing 3 iterations, the kernel call pattern is  
\begin{center}
\begin{small}
\begin{tabular}{c|r|r|r|r|r|c}
& \multicolumn{5}{c}{Thread ID}& \\
Local  &  &  & &  &  & \#Distinct, \\
Iteration & 0 & 1 & 2& 3 & 4 & Concurrent Kernels \\\hline
0 & +0 & +2 & -0 & +1 & +0 & 4 \\
1 & +1 & +0 & +3 & -0 & -1 & 5 \\
2 & -0 & -1 & +0 & -2 & -0 & 4
\end{tabular}
\end{small}
\end{center}
which shows that 4, 5, and 4 distinct kernels, i.e., signed IDs, are executed in each iteration, respectively. For the table above, the sign +/- of the entries indicates if the flipped bit is changed from 0 to 1~(inclusion) or 1 to 0~(exclusion), respectively. This divergence forces the control unit to serialize the execution. 

Let $\tau$ be the number of threads. 
In a straightforward iteration space distribution, e.g.,~\cite{kaya19}, one can set the chunk size to $\lceil{(2^{n-1} - 1)/\tau}\rceil$ which will probably yield thread groups with divergent execution flow. To address this problem, here we exploit the recursive nature of the Gray codes and the recursive construction steps; {\em reverse}, {\em concatenate} and {\em prefix}. An \(n\)-bit Gray code can be constructed from two \((n-1)\)-bit Gray codes by (1) reversing the second one, (2) concatenating the first with the reversed form of the second, and (3) using a prefix bit {\bf 0} for the first, and {\bf 1} for the second. As an example, following these steps, a 3-bit Gray code is constructed in Figure~\ref{fig:gray_code_construction}. 

\begin{figure}[htbp]
    \centering
    \includegraphics[width=0.43\textwidth]{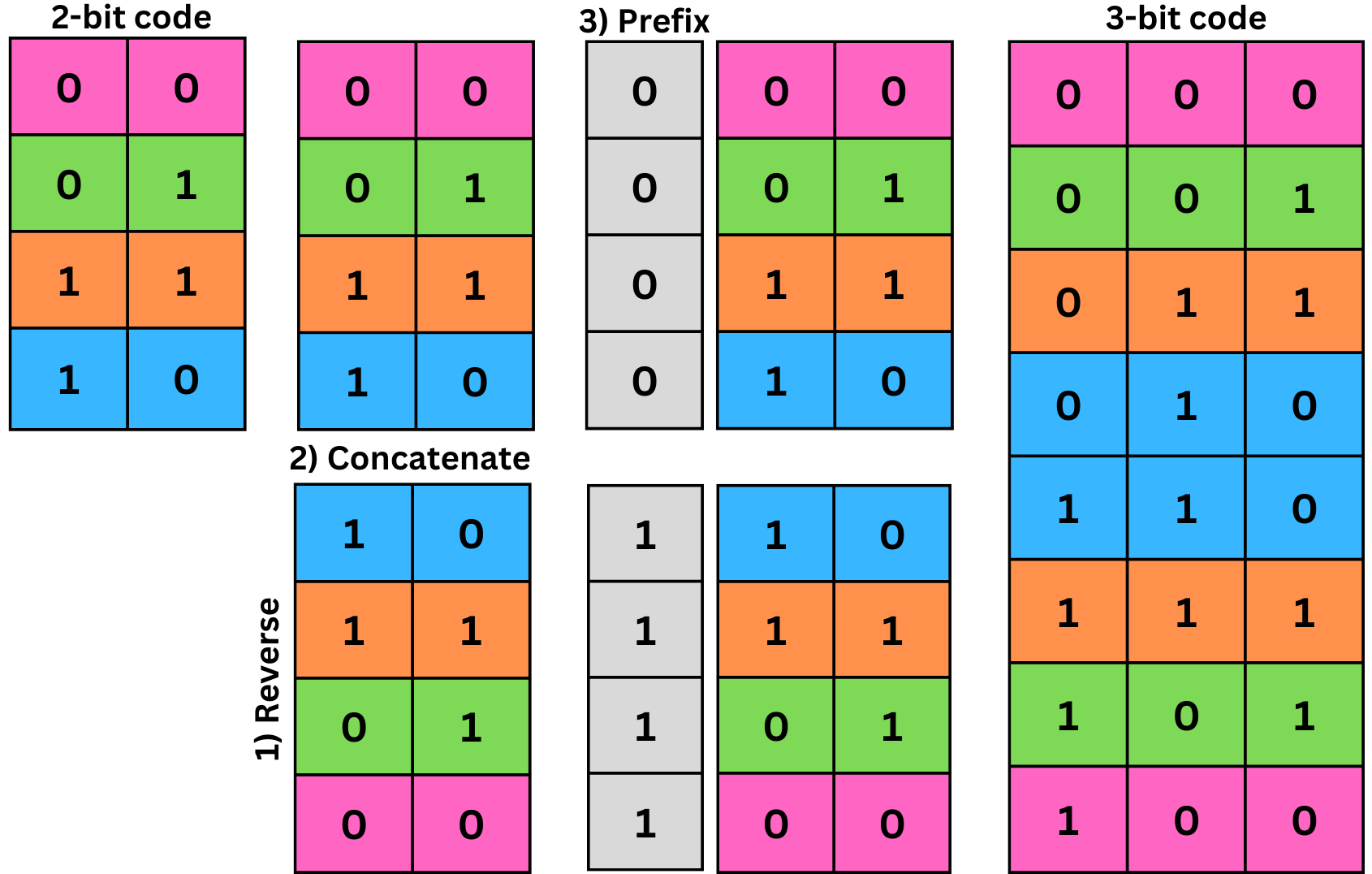}
    \caption{\small 3-bit Gray code construction from 2-bit sequences using the reverse, concatenate, and prefix method.}
    \label{fig:gray_code_construction}
\end{figure}

Since a Gray code can be constructed recursively, its {\em signed} changed-bit sequence~({\sc Scbs}) can also be described recursively. Let {\sc Scbs}(\(k\)) be the signed changed-bit sequence of an \(k\)-bit Gray code. Its recursive construction is given by: 

\[
\textsc{Scbs}(k) = [\textsc{Scbs}(k-1), +(k-1), -\textsc{Scbs}(k-1)^{R}]
\]
where $\textsc{Scbs}(k-1)^R$ denotes reversed $\textsc{Scbs}(k-1)$ and $-\textsc{Scbs}(k-1)$ denotes $\textsc{Scbs}(k-1)$ with alternated signs. For instance, having $\textsc{Scbs}(2) = [+0, +1, -0]$   
\begin{align*}
\textsc{Scbs}(3) &= [\textsc{Scbs}(2), +2, -\textsc{Scbs}(2)^{R}]\\
&= [[+0, +1, -0], +2, [+0, -1, -0]]
\end{align*}
and similarly $\textsc{Scbs}(4) = [\textsc{Scbs}(3), +3, -\textsc{Scbs}(3)^{R}]$ is 
\begin{align*}
[&[+0, +1, -0, +2, +0, -1, -0], +3, \\
 &[+0, +1, -0, -2, +0, -1, -0]]
\end{align*}
Note that for an $n \times n$ matrix, we are interested in $\textsc{Scbs}(n-1)$ for $(n-1)$-bit Gray codes which contain $2^{n-1} - 1$ elements. Based on the recursive construction of Gray codes, we have the following theorem:
\theoremstyle{plain}
\begin{theorem} For $1 \leq i \leq 2^{n-1} - 1$, the $i$th entry of $\textsc{Scbs}(n-1)$, has the changed bit ID $j$ if $2^j$ is the largest power of 2 dividing $i$. Furthermore, the sign of the same entry is {\bf +}~(indicating 0$\rightarrow$1) when $$(i - 2^j) / 2^{j+1}$$ is even, and {\bf -}~(indicating 1$\rightarrow$0) otherwise. 
\begin{proof}
For an $(n-1)$-bit Gray code, the $j$th bit changes for the first time at index $i = 2^j$. Furthermore, the sign of this change is {\bf +}~(see the recursive construction above). After this location, $j$ will continue to appear in $\textsc{Scbs}(n-1)$ in every other $2^{j+1}$th entry. Hence, $j$ will appear at the $i$th location if $2^j$ is the largest power of $2$ dividing $i$. For each appearance, the sign will alternate and the parity of $(i - 2^j) / 2^{j+1}$ is sufficient to find the sign. 
\end{proof}
\label{thm:scbs}
\end{theorem}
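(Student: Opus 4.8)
The plan is to prove both assertions simultaneously by induction on $k = n-1$, exploiting the recursive construction $\textsc{Scbs}(k) = [\textsc{Scbs}(k-1), +(k-1), -\textsc{Scbs}(k-1)^{R}]$ together with the $1$-based indexing on $\{1, \dots, 2^k - 1\}$. I would observe that the claimed formulas for the bit ID $j$ and for the sign depend on the position $i$ \emph{alone} and not on $k$, which is what makes induction natural here. The base case $k=1$ is a one-line check: $\textsc{Scbs}(1) = [+0]$, and for $i=1$ the largest power of two dividing $i$ is $2^0$, so $j=0$, while $(1 - 2^0)/2^{1} = 0$ is even, giving the sign $+$.

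For the inductive step I would partition the index range of $\textsc{Scbs}(k)$ into the three blocks dictated by the construction: the left copy at positions $1 \le i \le 2^{k-1}-1$, the pivot at $i = 2^{k-1}$, and the reversed, sign-flipped copy at positions $2^{k-1}+1 \le i \le 2^k-1$. The left block needs no work: its $i$th entry coincides with the $i$th entry of $\textsc{Scbs}(k-1)$, so the induction hypothesis transfers verbatim. The pivot is equally quick: $2^{k-1}$ is the largest power of two dividing itself, giving $j=k-1$, and $(2^{k-1}-2^{k-1})/2^{k}=0$ is even, matching the $+$ sign.

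The substantive case is the third block, which is where I expect the main obstacle to lie. Writing $i = 2^{k-1}+m$ with $1 \le m \le 2^{k-1}-1$, reversal sends block-offset $m$ to position $m' = 2^{k-1}-m$ of $\textsc{Scbs}(k-1)$, and I would record the key identity $i + m' = 2^{k}$. To match the bit ID I would argue that $m' < 2^{k-1}$ forces the exponent $j$ of the largest power of two dividing $m'$ to satisfy $j \le k-2$; then substituting $m' = 2^{j}\cdot(\text{odd})$ into $i = 2^{k} - m' = 2^{j}\bigl(2^{\,k-j}-\text{odd}\bigr)$ exhibits an odd bracket, so $2^{j}$ is also the largest power of two dividing $i$, and the ID agrees with the hypothesis applied at $m'$.

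For the sign I would use that the third block carries a global sign flip, so the sign at $i$ must be the \emph{negation} of the sign assigned by the hypothesis to position $m'$. The crux is a single parity computation: summing the two ``sign indices'' and simplifying with $i+m'=2^{k}$ gives
\begin{equation*}
\frac{i-2^{j}}{2^{j+1}} + \frac{m'-2^{j}}{2^{j+1}} = \frac{(i+m') - 2^{j+1}}{2^{j+1}} = 2^{\,k-j-1} - 1 .
\end{equation*}
Since $j \le k-2$ yields $k-j-1 \ge 1$, the right-hand side is odd, so exactly one summand is even; hence $(i-2^{j})/2^{j+1}$ is even precisely when $(m'-2^{j})/2^{j+1}$ is odd. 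Combined with the block's sign flip, this is exactly the statement that the sign at $i$ is $+$ iff $(i-2^{j})/2^{j+1}$ is even, closing the induction. The only delicate points are keeping the reversal bookkeeping $m \mapsto m'$ correct and confirming the strict bound $j \le k-2$ that guarantees the final parity, both of which follow from $m' < 2^{k-1}$.
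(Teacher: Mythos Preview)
Your proof is correct and takes a genuinely different route from the paper's. The paper gives a short periodicity argument: it asserts that bit $j$ first flips at position $2^j$ with sign $+$ (pointing to the recursive construction for this), and thereafter reappears every $2^{j+1}$ positions with alternating sign, from which both the bit-ID and sign formulas follow at once. This is essentially an appeal to the familiar structure of the reflected Gray code rather than a derivation from the recursion. You instead run a structural induction on $k$ directly against $\textsc{Scbs}(k) = [\textsc{Scbs}(k-1),\, +(k-1),\, -\textsc{Scbs}(k-1)^{R}]$, handling the reversed-and-negated block via the identity $i + m' = 2^{k}$ and the parity computation $\tfrac{i-2^{j}}{2^{j+1}} + \tfrac{m'-2^{j}}{2^{j+1}} = 2^{\,k-j-1}-1$. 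The paper's route is shorter and foregrounds the periodicity picture that is reused downstream (for the divergence lemma and the column-frequency bound), whereas your route is self-contained: the periodicity and sign-alternation facts the paper invokes without proof are precisely what your third-block analysis establishes.
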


\begin{lem} Assume that the chunk size is a power of two, i.e., $2^k$ for a given $k$, instead of $\lfloor{2^{n-1}/\tau}\rfloor$ as suggested by the literature. Then, there is no control divergence among the threads except for two iterations. 
\label{lem:lem}
\end{lem}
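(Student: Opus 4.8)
The plan is to combine the explicit formula from Theorem~\ref{thm:scbs} with the observation that, under a power-of-two chunk size, the global iteration indices processed by different threads at a fixed \emph{local} position differ only by multiples of $2^k$. First I would fix the indexing: with chunk size $\Delta = 2^k$, thread $t$ processes the global iterations $t\cdot 2^k + 1, \ldots, (t+1)\cdot 2^k$, so at local iteration $\ell$ (for $0 \le \ell < 2^k$) thread $t$ executes the entry of $\textsc{Scbs}(n-1)$ with index $g_t = t\cdot 2^k + m$, where $m = \ell+1 \in \{1,\ldots,2^k\}$. Control divergence at local iteration $\ell$ means that the signed column ID (the changed-bit ID $j$ together with its sign) is not identical for all $t$, so the whole argument reduces to analyzing how $j$ and its sign vary with $t$ for each fixed $m$.

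By Theorem~\ref{thm:scbs}, the changed-bit ID at index $g_t$ is the exponent of the largest power of two dividing $g_t = t\cdot 2^k + m$. The key step is a case split on $v$, the exponent of the largest power of two dividing $m$. When $v < k$, the term $t\cdot 2^k$ contributes a strictly higher power of two than $m$, hence the largest power of two dividing $g_t$ is exactly $2^{v}$ and the changed-bit ID $j = v$ is \emph{independent of $t$}. It then remains to track the sign, which by Theorem~\ref{thm:scbs} is governed by the parity of
\[
(g_t - 2^{v})/2^{v+1} = t\cdot 2^{\,k-v-1} + (m - 2^{v})/2^{v+1}.
\]
If $v < k-1$ the leading term is even for every $t$, so the parity — and therefore the sign — is the same across all threads and there is no divergence. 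If $v = k-1$ the leading term is exactly $t$, so the sign flips with the parity of $t$: this is a genuine divergence, and within $\{1,\ldots,2^k\}$ the only index with $v = k-1$ is $m = 2^{k-1}$, i.e.\ local iteration $\ell = 2^{k-1}-1$.

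The remaining case is $v = k$, which in the range $\{1,\ldots,2^k\}$ forces $m = 2^k$, i.e.\ the last local iteration $\ell = 2^k - 1$. Here $g_t = (t+1)\cdot 2^k$, so the largest power of two dividing $g_t$ equals $2^k$ times the largest power of two dividing $t+1$, whence the changed-bit ID itself depends on $t$; this is again a genuine divergence. Collecting the cases, divergence occurs only at the two local iterations $m = 2^{k-1}$ (sign divergence) and $m = 2^k$ (column-ID divergence), which proves the claim.

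I expect the main obstacle to be the sign bookkeeping in the boundary case $v = k-1$: one must verify that the coefficient $t\cdot 2^{\,k-v-1}$ degenerates to exactly $t$ rather than to an even multiple of $t$, since this is precisely what turns a harmless common offset into a parity-dependent, thread-divergent sign. A secondary subtlety is the truncated final chunk: because there are $2^{n-1}-1$ iterations rather than an exact multiple of $2^k$, the last thread stops one short, but this only removes iterations from a single thread and cannot create additional divergence, so it can be dispatched with a short remark.
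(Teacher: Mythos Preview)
Your proposal is correct and follows essentially the same approach as the paper's proof: both fix a local iteration, write the global index as $t\cdot 2^k + m$ (the paper uses $\ell$ for your $m$), and split into the three cases $m=2^{k-1}$, $m=2^k$, and everything else via the largest power of two dividing $m$. Your use of the $2$-adic valuation $v$ makes the case split slightly more systematic, and your treatment of the $m=2^k$ case (column ID determined by the valuation of $t+1$) is a bit more precise than the paper's, but the argument is the same.
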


\begin{proof} 
Consider the global index of the $\ell$th local iteration of each chunk for $1 \leq \ell \leq 2^k$. With SIMT, each $t$th thread, $0 \leq t < \tau$, will call the kernel based on $\textsc{Scbs}(n-1)[\ell + t \times 2^k]$. Let $i_t = \ell + t \times 2^k$ There are three cases:
\begin{itemize}
\item $\ell = 2^{k-1}$: the largest power of 2 dividing $i_t$ is $2^{k-1}$ for all $t$. However, the parity of $(i_t - 2^{k-1}) / 2^k$ equals the parity of $t$. Based on Theorem~\ref{thm:scbs}, in this case, all threads will call a kernel for column $k-1$. However, half of them will call an exclusion kernel, and the other half will call the inclusion version. Hence, there is a control divergence.  
\item $\ell = 2^{k}$: the largest power of 2 dividing $i_t$ is $2^k$ for even $t$, and $2^{k+1}$ for odd $t$. Based on Theorem~\ref{thm:scbs}, half of the threads call a kernel for column $k$, and the other half will call one for column $k + 1$. Hence, there is a control divergence.  
\item $\ell \neq 2^{k-1}$ and $\ell \neq 2^k$: the largest power of 2 dividing $i_t$ equals the largest power of 2 dividing $\ell$. Let $j$ be that power, i.e., the ID of the column which will be updated for row sums. The parity of $(i_t - 2^{j}) / 2^{j + 1}$ equals to the parity of $(\ell - 2^{j}) / 2^{j + 1}$. Based on Theorem~\ref{thm:scbs}, all the threads will call the same kernel and there will be no control divergence. 
\end{itemize}
\end{proof}

As a side note, for $\ell = 2^{k-1}$ in the proof of Lemma~\ref{lem:lem}, the changed-bit ID, $k-1$, is the same. Hence only two, the inclusion and exclusion kernels of column $k-1$, will be called. In this case, the cost of divergence will not be much. For the later iteration, $i = 2^{k}$, four different inclusion/exclusion kernels can be called for columns $k$ and $k+1$. In addition, having no control divergence implies that the thread loads are perfectly balanced since the same column is processed in concurrent iterations. 

Let $\tau$ be the number of threads suggested by the CUDA Occupancy API \cite{NVIDIA} and $\rho(x)$ is a function that returns the largest power of 2 less than or equal to a given integer \(x\). Following Lemma~\ref{lem:lem}, the next approach we propose is setting $\Delta = \rho(\lfloor2^{n-1} / \tau\rfloor)$ for the first kernel launch and use $\tau$ threads. Then the remaining $(2^{n-1} - 1) - (\Delta \times \tau)$ can also be handled similarly. This is repeated until all the iterations are consumed. Algorithm~\ref{alg:kernel_coalescing} summarizes our approach with a practical heuristic that limits the minimum chunk size to 1024 for each call. The last kernel launch at line~\ref{ln:lastcall} with parameters ($start$, $\Delta$, $end$) initiates a chunk size, $\Delta = 1024$, which is larger than required to avoid divergence. Some later threads do not contribute to the permanent computation for this last launch, i.e., the number of working threads will be smaller than $\tau$.\looseness=-1
 
\begin{algorithm}[H]
\setstretch{1.1}
\caption{: {\sc GenerateLaunchParameters}} \label{alg:kernel_coalescing}
    \small
\begin{flushleft}
    \small
    \textbf{Input:} \hspace*{1.5ex}$\tau$: Number of threads \\ \hspace*{8.7ex}$n$: Dimension of the matrix $\Ab$ \\
     \textbf{Output:} $\mathcal{K}$: Set of parameters for main kernel launches  
\end{flushleft}
\begin{algorithmic}[1]
\algrule
    \State $\mathcal{K} \gets \varnothing$
    \State $start \gets 1$
    \State $end \gets 2^{n - 1}$

    \While{$end - start > 0$} \algorithmiccomment{there exist more iterations}
        \State $\Delta \gets 1024$ 
        \While{$(\Delta \times \tau) \leq end - start$}
            \State $\Delta \gets \Delta \times 2$
        \EndWhile
        \State $\Delta \gets \Delta / 2$\algorithmiccomment{chunk size}

        \If{$\Delta = 512$}
        \State {$\mathcal{K} \gets \mathcal{K}$ $\cup$ \{($start$, $1024$, $end$)\}}\label{ln:lastcall}
            \State \textbf{break}
        \EndIf

        \State {$\mathcal{K} \gets \mathcal{K}$ $\cup$ \{($start$, $\Delta$, $end$)\}}

        \State $start \gets start + \tau \times \Delta$
    \EndWhile

\end{algorithmic}
\end{algorithm}

\section{Code Generation with Hybrid Memory Usage}\label{sec:hybrid}

Using a register for each row sum in $\xb$ can limit the occupancy of the GPU. To mitigate this problem, we propose  {\em hybrid-memory code generation} that uses global memory to store $\xb$ entries in addition to registers. We first prove the following lemma on the change bit pattern frequency for a Gray code.

\theoremstyle{plain}
\begin{lem}
For an $(n-1)$-bit Gray code, the probability of an entry in {\sc Scbs}$(n-1)$ containing the value $0 \leq j < n-1$, positive or negative, is 
\begin{equation}
2^{n-j-2} \Big/ (2^{n-1} - 1).
\end{equation}
\label{eq:prob}
\end{lem}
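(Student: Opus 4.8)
The plan is to reduce the statement to a counting problem via Theorem~\ref{thm:scbs}. By that theorem, the $i$th entry of $\textsc{Scbs}(n-1)$ carries the value $j$ (regardless of its sign) precisely when $2^j$ is the largest power of $2$ dividing $i$, that is, when the $2$-adic valuation of $i$ equals $j$. Since every index $i$ contributes exactly one entry and the sign is immaterial for this count, the number of entries equal to $\pm j$ is simply the number of integers $i \in \{1, 2, \ldots, 2^{n-1}-1\}$ whose valuation is exactly $j$, and the desired probability is this count divided by the total length $2^{n-1}-1$.

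First I would characterize the relevant indices: an integer has $2$-adic valuation exactly $j$ if and only if it can be written as $2^j m$ with $m$ odd. Thus I need to count odd integers $m \geq 1$ with $2^j m \leq 2^{n-1}-1$, equivalently $m < 2^{n-1-j}$. Because $2^{n-1-j}$ is itself a power of two (for $0 \leq j \leq n-2$), the odd integers strictly below it are $1, 3, 5, \ldots, 2^{n-1-j}-1$, of which there are exactly $2^{n-1-j}/2 = 2^{n-j-2}$. Dividing by $2^{n-1}-1$ then yields the claimed expression.

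As a consistency check I would sum these counts over all admissible bit IDs, $\sum_{j=0}^{n-2} 2^{n-j-2} = 2^{n-2} + 2^{n-3} + \cdots + 2^0 = 2^{n-1}-1$, which recovers the total number of entries and confirms that the events partition the index set (so the probabilities sum to one). The only place demanding care, and the closest thing to an obstacle, is the boundary of the counting range: the index set runs up to $2^{n-1}-1$ rather than $2^{n-1}$, so I must verify that excluding the single index $i = 2^{n-1}$ changes no count. This is immediate, since $2^{n-1}$ has valuation $n-1$, which lies outside the range $0 \leq j \leq n-2$ of bit IDs appearing in $\textsc{Scbs}(n-1)$; hence no off-by-one correction is needed, and the strict inequality in $m < 2^{n-1-j}$ is the genuinely load-bearing detail to get right.
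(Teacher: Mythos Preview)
Your proposal is correct and follows essentially the same approach as the paper: both invoke Theorem~\ref{thm:scbs} to identify the entries carrying value $j$ with indices whose $2$-adic valuation is $j$, and then count those indices in $\{1,\ldots,2^{n-1}-1\}$. The only cosmetic difference is that the paper evaluates the count via the arithmetic-progression floor expression $\left\lfloor 1 + \frac{(2^{n-1}-1)-2^j}{2^{j+1}}\right\rfloor$, whereas you parameterize the same set as $\{2^j m : m \text{ odd}\}$ and count odd $m < 2^{n-1-j}$ directly; your added sanity check that the counts sum to $2^{n-1}-1$ is a nice touch not present in the paper.
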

\begin{proof}
Based on Theorem~\ref{thm:scbs}, the value $0 \leq j < n-1$ first appears at $2^j$th location, and then periodically reappears in every $2^{j+1}$th entry. Hence, the number of appearances of $j$ is exactly 
\begin{align*} 
\left\lfloor 1 + \frac{(2^{n-1} - 1) - 2^j}{2^{j+1}}\right\rfloor &= \left\lfloor \frac{(2^{n-1} - 1) + 2^j}{2^{j+1}}\right\rfloor \\
&= \left\lfloor 2^{n-j-2} + \frac{1}
{2} - \frac{1}{2^{j+1}}\right\rfloor \\
&= 2^{n-j-2}.
\end{align*}
Since there are $2^{n-1} - 1$ entries in {\sc Scbs}($n$), \eqref{eq:prob} holds. 
\end{proof}

For simplicity, in the rest of the paper, we will use $1 / 2^{j+1}$ for the probability of an entry being $j$, which is close enough for our purposes. To include/exclude a column $j$ in an iteration, consider the accesses to the $\xb$ vector: for a dense matrix $\Ab$, all $\xb$ entries are updated. When the matrix is sparse, this is not the case. For instance, for the matrix in Fig.~\ref{fig:gray_code_change}, \(\mathbf{x}\)[2] has approximately 1/2 + 1/16 + 1/32 = 19/32 probability of being modified in an iteration whereas the same probability for \(\mathbf{x}\)[4] is 1/4 + 1/8 = 3/8.\looseness=-1

Following Lemma~\ref{eq:prob}, for an integer $1 \leq c \leq n-1$, the probability of an absolute value of entry in {\sc Scbs}($n-1$) being smaller than $c$ approximately equals to 
$$\sum_{j = 0}^{c-1}\frac{1}{2^{j+1}} = 1 - \frac{1}{2^{c}}.$$ 
Let ${\cal I}_c$ be the set of row ids such that $i \in {\cal I}_c$ if  $a_{i,j} \neq 0$ for at least one $0 \leq j < c$. 
When only the $\xb[i]$s for all $i \in {\cal I}_c$ are stored in registers and the rest are kept in global memory, for $1 - 1/2^{c}$ of the iterations there will not be any global memory access. In this case, only $|{\cal I}_c|$, instead of $n$, registers per thread will be used, allowing the main kernel to be launched with more threads. To increase the impact of this optimization and make the matrix suitable for it, we propose {\em permanent ordering} that organizes the rows/columns to obtain the sparsity pattern given in Fig.~\ref{fig:permanent_ordering_pattern}. Note that {\sc{perm}}$(\Ab) =$ {\sc{perm}}$(\mathbf{P}\Ab\mathbf{Q})$, 
where $\mathbf{P}$ and $\mathbf{Q}$ are any permutation matrices. That is the permanent is preserved under any row or column reordering.

\begin{figure}[htbp]
     \centering
     \begin{subfigure}[b]{0.45\linewidth}
         \centering
         \includegraphics[width=\linewidth]
         {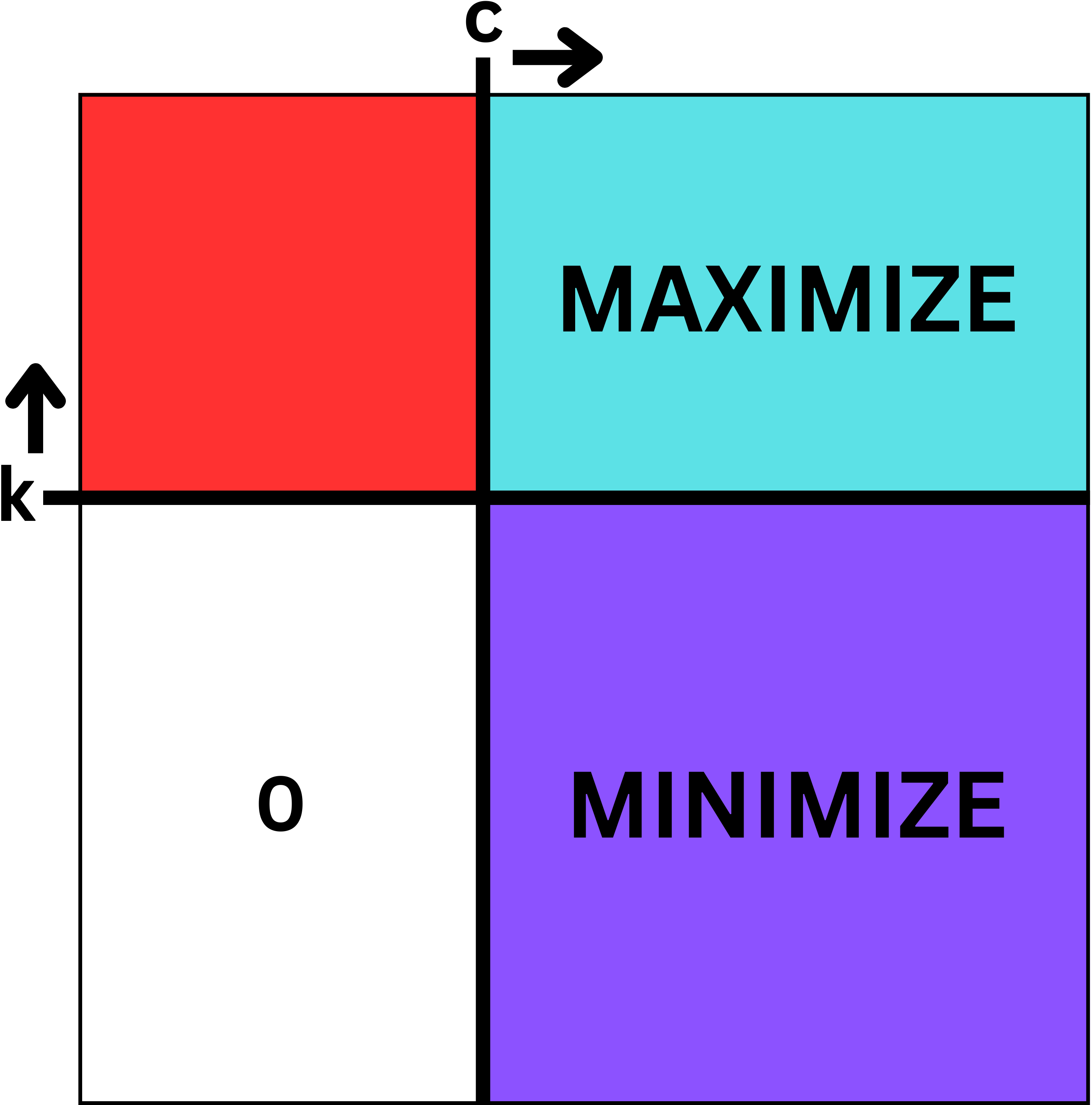}
         \caption{\small The desired nonzero pattern from ordering. The number of rows, $k$, and columns, $c$, are marked with the desired direction to highlight the variables influencing quality.
         }
         \label{fig:permanent_ordering_pattern}
     \end{subfigure}
     \hfill
     \begin{subfigure}[b]{0.52\linewidth}
         \centering
         \includegraphics[width=\linewidth]
         {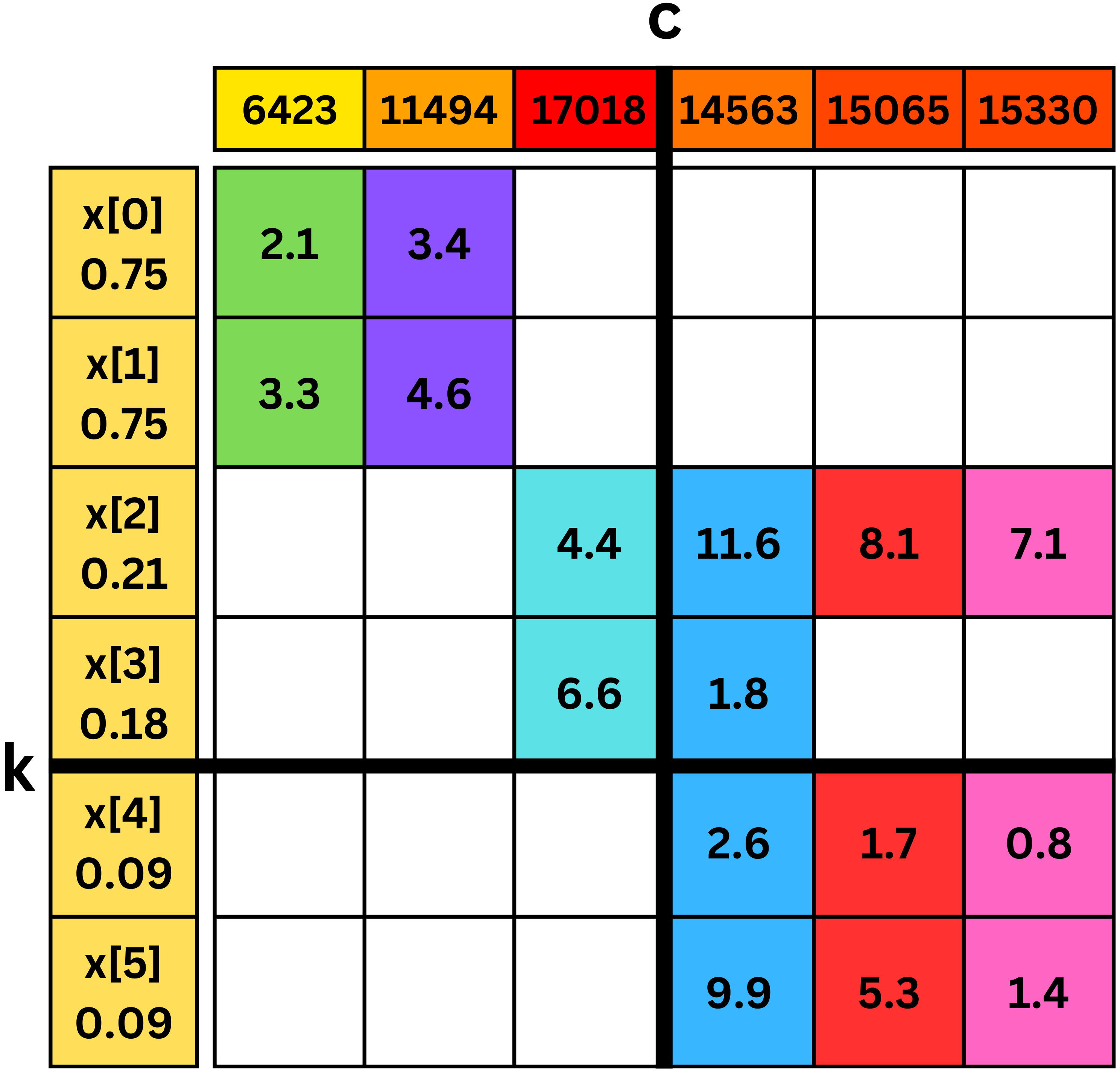}
         \caption{\small The $6 \times 6$ sparse matrix in Fig.~\ref{fig:gray_code_change} ordered. The row/column partitioning for register/global memory usage is also given along with the computed score for each additional column. }  
         \label{fig:permanent_ordering_matrix}
     \end{subfigure}
     \caption{\small The desired sparsity pattern from permanent ordering~(left). The matrix from Fig.~\ref{fig:gray_code_change} ordered via permanent ordering~(right).} 
     \label{fig:permanent_ordering}
\end{figure}

Assume that the rows in the matrix are ordered in a way that the ones in ${\cal I}_c$ appear first. As Fig.~\ref{fig:permanent_ordering_pattern} shows, this strategy partitions the matrix into 4 areas: (1) the top-left~(red) area contains the nonzeros of $c$ columns whose inclusion/exclusion kernels will only touch registers, (2) the bottom-left~(white) area with no nonzeros, (3) the top-right~(blue) area which contains the register-updating nonzeros of the remaining $(n-c)$ columns, and (4) the bottom-right~(purple) area containing the global-memory-updating nonzeros of the same set of columns. To maximize the number of iterations not performing global memory accesses, i.e., the number of inclusion/exclusion kernel calls that only update registers, we want $c$ to be larger. To use less registers per thread, we want $k$ to be smaller. Nevertheless, these two goals conflict with each other. It is also preferable that the top-right area have more nonzeros than the bottom-right one for a small number of global memory updates.

\begin{algorithm}
\setstretch{1.1}
\caption{: {\sc PermanentOrdering}} 
\label{alg:permanent_ordering}
    \small
\begin{flushleft}
\small
\textbf{Input:} \hspace*{1.3ex}($rptrs,  cids, rvals$) \algorithmiccomment{\small CSR of $n \times n$ sparse matrix $\Ab$} \\ \hspace*{8.4ex}($cptrs, rids, cvals$) \algorithmiccomment{CSC of $n \times n$ sparse matrix $\Ab$}\\
\textbf{Output:} {\tt rowPerm} \algorithmiccomment{\small Row permutation for $\Ab$} \\ \hspace*{8.3ex}{\tt colPerm} \algorithmiccomment{\small Column permutation for $\Ab$} 
\end{flushleft}
\algrule
\begin{algorithmic}[1]

\For{$j \gets 0$ to $n-1$}
    \State {$cdeg[j] \leftarrow cptrs[j+1] - cptrs[j]$ }
\EndFor
\For{$i \gets 0$ to $n-1$}
\State {$rmark[i] \leftarrow {\mathbf{false}}$}
\EndFor
\algrule

\State{$ridx \leftarrow 0$} \algorithmiccomment{\small No of permuted (moved up) rows}
\For{$cidx \gets 0$ to $n-1$} 

    \hrulefill
   \State{$\blacktriangleright$ \footnotesize Permute the $col$umn with min. no of unpermuted nonzeros}
    \State{$col \gets \arg\min_{j} \{cdeg[j]$\}}
    \State{$colPerm[cidx] \gets col$}
    \State $cdeg[col] \gets \infty$

    \For{each $row$ s.t. $a_{row, col} \neq 0$} \algorithmiccomment{\small via CSC}
    
        \hrulefill
   \State{$\blacktriangleright$ \footnotesize Permute each unpermuted $row$ sharing a nonzero with $col$}
        \If{$rmark[row] = \mathbf{false}$} 
            \State {$rmark[row] \gets \mathbf{true}$}
            \State{$rowPerm[ridx] \gets row$}
            \State{$ridx \gets ridx + 1$}
            
\hrulefill
   \State{$\blacktriangleright$ \footnotesize Update the no of unpermuted nonzeros for each $col'$}
            \For{each $col'$ s.t. $a_{row, col'} \neq 0$} \algorithmiccomment{\small via CSR}
                \State{$cdeg[col'] \leftarrow cdeg[col'] - 1$}
            \EndFor
        \EndIf
    \EndFor
\EndFor

\end{algorithmic}
\end{algorithm}

\begin{algorithm}
\setstretch{1.1}
\caption{: {\sc Partitioning}} 
\label{alg:partitioning}
    \small
\begin{flushleft}
\small
\textbf{Input:} 
\hspace*{1.3ex}($cptrs, rids, cvals$) \algorithmiccomment{ CSC of $n \times n$ ordered matrix $\Ab'$}\\
\hspace*{8.4ex}$GRratio = 16$ \algorithmiccomment{Rel. cost of glob./reg. mem. updates }\\
\textbf{Output:} $c$ \algorithmiccomment{\small Number of columns only updating the registers.} \\ \hspace*{8.3ex}$k$ \algorithmiccomment{\small Number of entries of $\xb$ to be stored in registers. } 
\end{flushleft}
\algrule
\begin{algorithmic}[1]
    \State $k$ $\gets 0$
    \State $c$ $\gets 0$
    \State $bestScore$ $\gets 0$
    \State $nrows$ $\gets 0$
    \For{$j \gets 0$ \textbf{to} $n - 1$}
        \State{$\blacktriangleright$ \footnotesize Number of rows touched by the first $j+1$ columns (via CSC)} 

        \State $nrows$ $\gets$ max(\texttt{$nrows$,$(\max\{i: a_{i,j} \neq 0\} + 1)$})\label{ln:nrows} 
        \State $nregisters$ $\gets$ $nrows \times 2$ \label{ln:nregs} \algorithmiccomment{\small double precision storage} 
        \State $regCost$ $\gets {nregisters \times \left(1 - 2^{-{(j+1)}}\right)}$~\label{ln:regcost} 
        \State $globCost$ $\gets {(n - nrows) \times 2^{-(j +1)}}\times $ $GRratio$~\label{ln:globcost}

        \State{$\blacktriangleright$ \footnotesize Estimate \# of threads to launch based on register usage} 
        \State $\tau$ $\gets$ {\sc CalculateNoThreads}($nregisters$)~\label{ln:tau} 
        
        \State $currentScore$ $\gets \tau / (regCost + globCost)$
        \If{($currentScore$ $>$ $bestScore$) \textbf{or} ($nrows$ = $k$)} 
            \State $bestScore$ $\gets$ $currentScore$ 
            \State $k$ $\gets$ $nrows$ 
            \State $c$ $\gets j+1$ 
        \EndIf
    \EndFor
\end{algorithmic}
\end{algorithm}

The proposed ordering algorithm is described in Alg.~\ref{alg:permanent_ordering}. The algorithm iterates $n$ times and at each iteration, it chooses a column $col$ with the minimum number of nonzeros on unordered rows. Along with $col$, the unordered rows sharing a nonzero with $col$ are also ordered. For each such $row$, the unordered nonzero counts are updated for the columns sharing a nonzero with $row$. The matrix from Fig.~\ref{fig:gray_code_change} ordered via this approach is given in Figure~\ref{fig:permanent_ordering_matrix}. After ordering, we set the values for $k$ and $c$ with a partitioning technique given in Algorithm~\ref{alg:partitioning}. Starting from the first one, for each $0 \leq j < n$, Alg.~\ref{alg:partitioning} checks if column $j$ should be added to the pure register area.\looseness=-1

Let $nrows$ be the number of rows~(line~\ref{ln:nrows}) touched by the first $j+1$ columns. Algorithm~\ref{alg:partitioning} first computes the worst-case cost of register updates as $regCost = nregisters \times (1 - 2^{-(j+1)})$ since for approximately $(1 - 2^{-(j+1)})$ of the iterations, $nrows$ number of double-precision $\xb$ entries, stored in 32-bit registers will be updated~(line \ref{ln:regcost}) which is the amount of computation power the partitioning can obtain per computation. Similarly, the global memory update cost, $globCost$, is computed by considering the $(n-nrows)$ values stored in global memory which will be touched in $2^{-(j+1)}$ of the iterations. $globCost$ is then scaled by the relative cost of a global memory update to a register update~(line \ref{ln:globcost}). Based on our preliminary benchmarks, we used $GRratio = 16$ as the scaling factor. Let $\tau$ represent the estimated number of threads that can be launched given each thread uses $nregisters$~(line \ref{ln:tau}). The $currentScore$ of adding the $j$th column, i.e., setting $c = j+1$, is then calculated as $\tau / (regCost + globCost)$. If $currentScore$ is bigger than the $bestScore$, the heuristic adds column $j$ into the pure register area by setting $k = nrows$ and $c = j+1$.
Consider the matrix in Fig.~\ref{fig:permanent_ordering_matrix} partitioned into four regions with $c = 3$ and $k = 4$. This configuration makes the first four most {\em valuable} elements of \(\mathbf{x}\) stored in registers while the last two are in global memory. The last three columns are not included in the register area since their scores, given on top of the columns in Fig.~\ref{fig:permanent_ordering_matrix}, are not larger than the one computed for the third column.

Listing~\ref{lst:colKernel} is the inclusion kernel for column 0 of the original matrix given in Fig.~\ref{fig:gray_code_change}. This column is now column 3 in the ordered matrix given in Figure~\ref{fig:permanent_ordering_matrix}. The new kernel, {\tt \small hybrid\_c3\_inc}, which is using both the registers and the global memory, is given in Listing~\ref{lst:hybridColKernel}. In addition to the parameters for 4 registers for the first 4 row sums, each inclusion/exclusion kernel now takes the $\xb$ pointer for the global memory part as input. The indexing mechanics, i.e., $nthreads \times row + tid$, for the global part of $\xb$ is designed to make the accesses coalesced. Finally, the variables used for indexing, i.e., $nthreads$ and $tid$, are declared as {\tt \small \color{blue}{volatile}} to avoid the compiler putting the $\xb[.]$ indexes to registers which drastically limits the number of threads that can be launched. The global part of the final product is computed within this kernel.

The kernel {\tt \small hybridProdReduce} in Listing~\ref{lst:hybridProdReduce} adds the registers' contribution to the final product. Note that for register-updating columns, i.e., columns $j < c$, the {\tt \small {globalProduct}} is not changed and can be reused. Hence, when $j < c$, the global memory is not accessed even for reading. 

\begin{center}
\begin{tabular}{c}
\begin{lstlisting}[caption={\small Hybrid inclusion kernel for column 3.}, numbers=left, linewidth=8cm, label={lst:hybridColKernel}]
#define C double
__device__ __inline__ void hybrid_c3_inc(
    C& product, C& globalProduct 
    const C& reg0, const C& reg1, 
    C& reg2, C& reg3
    C* x, const volatile unsigned& nthreads,
    const volatile unsigned& tid) {  
    
    reg2 += 11.600000;
    reg3 += 1.800000;

    globalProduct = 1;
    x[nthreads * 0 + tid] += 2.600000;
    globalProduct *= x[nthreads * 0 + tid];
    x[nthreads * 1 + tid] += 9.900000;
    globalProduct *= x[nthreads * 1 + tid];

    hybridProdReduce(product, globalProduct, 
                        reg0, reg1, reg2, 
                        reg3);
}
\end{lstlisting}
\end{tabular}
\end{center}

\begin{center}
\begin{tabular}{c}
\begin{lstlisting}[caption={\small Hybrid reduction over the $\mathbf{x}$ array.}, numbers=left, linewidth=8cm, label={lst:hybridProdReduce}]
#define C double
__device__ __inline__ void hybridProdReduce(
    C& product, const C& globalProduct, 
    const C& reg0, const C& reg1, 
    const C& reg2, const C& reg3) {
    product *= reg0;
    product *= reg1;
    product *= reg2;
    product *= reg3;
    product *= globalProduct;
}
\end{lstlisting}
\end{tabular}
\end{center}

\section{Experimental Results}

We first describe the components in the experimental setting. After these, we will present the performance results.

\subsection{Architectures }
We conducted our experiments on three servers. The first two, \textbf{Arch-1} and \textbf{Arch-2}, are used for running the current state-of-the-art CPU implementation from the literature. \textbf{Arch-1} has a theoretical performance of 7 TFLOPs and is equipped with two 56-core Intel Xeon Platinum 8480+ 2.0 GHz processors, each costing \$10,710, and 256 GB of DDR5 memory. \textbf{Arch-2} offers a theoretical performance of 3.2 TFLOPs, with two Intel Xeon 6258R 2.70 GHz processors, providing 56 cores in total and 192 GB of memory. \textbf{Arch-3}, on the other hand, was dedicated to running the GPU implementations mentioned and proposed in this paper. It is equipped with an Nvidia A100 GPU with 80 GB of device memory, having a theoretical performance of 9.6 TFLOPs which costs almost the same as the CPUs in \textbf{Arch-1}.

\subsection{Algorithms and Implementations}\label{sec:algos}
There are four implementations tested in this study:

\begin{enumerate}[leftmargin=*]
\item{\textit{CPU-SparsePerman}}: As the CPU baseline, we directly utilized {\textsc{SparsePerman}} presented in Algorithm \ref{alg:sparyser}. In addition, we introduced two key improvements to the algorithm as suggested by the literature~\cite{kaya19}. First, we ordered the matrix using degree-sort in ascending order~\cite{kaya19}. As shown in Lemma~\ref{eq:prob}, the probability of an entry in CBS($n$) equal to $j$ is high when $j$ is small. By sorting the columns in ascending order based on the number of nonzeros they have, the frequently accessed columns perform fewer updates, reducing the amount of data movement of the \textbf{for} loop at line~\ref{ln:sumfor}. Second, we tracked the zeros of $\xb$. In the presence of a zero, all the expensive multiplications and the update on the result in Algorithm~\ref{alg:sparyser} are skipped. 
\item{\textit{GPU-SparsePerman}}: In addition to the CPU, we ported {\textsc{SparsePerman}} efficiently to the GPU with the implementation explained in Section~\ref{subsec:parper}, and used in the beginning of Section~\ref{sec:gpuper}. The vector $\xb$ and the matrix \textbf{A} are stored in shared memory to increase the arithmetic intensity of the implementation. The same improvements applied to the CPU version were incorporated into the GPU algorithm. 
\item{\textit{CodeGen-PureReg}}: This is our kernel-generation-based implementation that only uses registers as explained in Section~\ref{sec:gpuper} including the technique used for minimizing control divergence in Section~\ref{sec:diverge}.
\item{\textit{CodeGen-Hybrid}}: This is the kernel-generation-based implementation that uses both registers and global memory. Algorithm~\ref{alg:permanent_ordering} and Algorithm~\ref{alg:partitioning} are used to order the original matrix and determine optimal values for $k$ and $c$.
\end{enumerate}

\begin{table}[ht]
\centering
\caption{\small Real-life matrices used in the experiments: the dimensions, total number of nonzeros, and the densities ($nnz / (n \times n)$).}\label{tab:real_life_set}
\scalebox{0.79}{
\begin{tabular}{l|rrr||l|rrr}
Matrix & \multicolumn{1}{c}{\(n\)} & \multicolumn{1}{c}{\(nnz\)} & density & Matrix & \multicolumn{1}{c}{\(n\)} & \multicolumn{1}{c}{\(nnz\)} & density\\
\midrule
bcsstk01       & 48 & 400 & 17.4\% & bcspwr02         & 49 & 167 & 7.0\% \\
mycielskian6     & 47 & 472 & 21.4\% & curtis54    & 54 & 291 & 10.0\% \\
mesh1e1        & 48 & 306  & 13.3\% & d\_ss  & 53 & 144 & 5.1\% 
\end{tabular}
}
\end{table}

\begin{table*}[htbp]
\caption{\small Execution times~(seconds) and the number of CPU/GPU threads used to compute the permanents of synthetic matrices across three different architectures.}\label{tab:synthetic_experiments}
\small
\centering
\scalebox{0.85}{
\begin{tabular}{l||rr|rr|rr|rr|rr}
\rowcolor{gray!20}
\multicolumn{11}{c}{$n = 40$} \\
\midrule
& \multicolumn{2}{c|}{\cellcolor{gray!20}$p = 0.1$} & \multicolumn{2}{c|}{\cellcolor{gray!20}$p = 0.2$} & \multicolumn{2}{c|}{\cellcolor{gray!20}$p = 0.3$} & \multicolumn{2}{c|}{\cellcolor{gray!20}$p = 0.4$} & \multicolumn{2}{c}{\cellcolor{gray!20}$p = 0.5$} \\
& Time & \# threads & Time & \# threads & Time & \# threads & Time & \# threads & Time & \# threads \\
\midrule
\textit{CPU-SparsePerman} (\textbf{Arch-1}) & 85.83 & 112 & 112.68 & 112 & 132.92 & 112 & 171.64 & 112 & 197.50 & 112 \\
\textit{CPU-SparsePerman} (\textbf{Arch-2}) & 362.51 & 56 & 449.17 & 56 & 483.46 & 56 & 660.82 & 56 & 468.83 & 56 \\
\textit{GPU-SparsePerman} \textbf{(Arch-3)} & 19.03 & 48384 & 28.91 & 41472 & 33.33 & 41472 & 37.38 & 41472 & 30.78 & 31104 \\ 
\textit{CodeGen-PureReg} \textbf{(Arch-3)} & 7.52 & 55296 & 7.22 & 55296 & 7.30 & 55296 & 7.95 & 55296 & 8.75 & 55296 \\ 
\textit{CodeGen-Hybrid} \textbf{(Arch-3)} & 3.18 & 138240 & 3.94 & 96768 & 4.77 & 82944 & 5.96 & 69120 & 6.51 & 69129 \\ 
\midrule
\rowcolor{gray!20}
\multicolumn{11}{c}{$n = 45$} \\
\midrule
& \multicolumn{2}{c|}{\cellcolor{gray!20}$p = 0.1$} & \multicolumn{2}{c|}{\cellcolor{gray!20}$p = 0.2$} & \multicolumn{2}{c|}{\cellcolor{gray!20}$p = 0.3$} & \multicolumn{2}{c|}{\cellcolor{gray!20}$p = 0.4$} & \multicolumn{2}{c}{\cellcolor{gray!20}$p = 0.5$} \\
& Time & \# threads & Time & \# threads & Time & \# threads & Time & \# threads & Time & \# threads \\
\midrule
\textit{CPU-SparsePerman} (\textbf{Arch-1}) & 2995.07 & 112 & 3478.29 & 112 & 4730.79 & 112 & 6430.01 & 112 & 7603.60 & 112 \\
\textit{CPU-SparsePerman} (\textbf{Arch-2}) & 13453.54 & 56 & 14537.87 & 56 & 17651.44 & 56 & 26871.01 & 56 & 28794.42 & 56 \\
\textit{GPU-SparsePerman} \textbf{(Arch-3)} & 777.07 & 41472 & 921.39 & 41472 & 1351.32 & 34560 & 1542.88 & 31104 & 1910.92 & 31104 \\ 
\textit{CodeGen-PureReg} \textbf{(Arch-3)} & 222.55 & 55296 & 239.19 & 55296 & 242.50 & 55296 & 269.96 & 55296 & 278.99 & 55296 \\ 
\textit{CodeGen-Hybrid} \textbf{(Arch-3)} & 92.06 & 124416 & 93.69 & 110592 & 155.28 & 69120 & 196.25 & 69120 & 251.07 & 55296 \\ 
\midrule
\rowcolor{gray!20}
\multicolumn{11}{c}{$n = 48$} \\
\midrule
& \multicolumn{2}{c|}{\cellcolor{gray!20}$p = 0.1$} & \multicolumn{2}{c|}{\cellcolor{gray!20}$p = 0.2$} & \multicolumn{2}{c|}{\cellcolor{gray!20}$p = 0.3$} & \multicolumn{2}{c|}{\cellcolor{gray!20}$p = 0.4$} & \multicolumn{2}{c}{\cellcolor{gray!20}$p = 0.5$} \\
& Time & \# threads & Time & \# threads & Time & \# threads & Time & \# threads & Time & \# threads \\
\midrule
\textit{CPU-SparsePerman} (\textbf{Arch-1}) & 27914.90 & 112 & 34747.61 & 112 & 42978.76 & 112 & 54653.47 & 112 & 63668.74 & 112 \\
\textit{GPU-SparsePerman} \textbf{(Arch-3)} & 6556.42 & 41472 & 9199.27 & 34560 & 11505.95 & 31104 & 14811.26 & 31104 & 16818.86 & 27648 \\ 
\textit{CodeGen-PureReg} \textbf{(Arch-3)} & 1860.66 & 55296 & 1943.09 & 55296 & 2154.58 & 55296 & 2214.10 & 55296 & 2382.50 & 55296 \\ 
\textit{CodeGen-Hybrid} \textbf{(Arch-3)} & 741.59 & 124416 & 1059.77 & 82944 & 1361.69 & 69120 & 1906.64 & 55296 & 2023.94 & 55296 \\ 
\end{tabular}
}
\end{table*}

\subsection{Datasets}

We utilized two sets of sparse matrices to test the implementations proposed in this paper. The first set contains synthetic matrices of sizes $n \in \{40, 45, 48\}$ and densities $p \in \{0.1, 0.2, 0.3, 0.4, 0.5\}$, generated using the Erdős–Rényi model. In this model, each $a_{i,j}$ is nonzero with $p$ probability. Therefore, any row/column is expected to have $p \times n$ nonzeros, and the total number of nonzeros in the matrix is expected to be $p \times n^2$. The values of these matrices are randomly chosen from the interval [$0,1$]. One caveat is that we rejected structurally rank-deficient matrices and continued the process until a matrix with a nonzero permanent was generated.

The second dataset consists of six real-life matrices from the Suite Sparse Matrix Collection~\cite{davis2011ufl}. We selected all the $n \times n$ square matrices with a nonzero permanent for $40 \leq n \leq 55$, given in Table~\ref{tab:real_life_set}. On these matrices, however, we tested only the most optimized implementation, \textit{CodeGen-Hybrid}~({\bf Arch-3}) and compared its performance to {\sc SparsePerman} executed on CPU with {\bf Arch-1} and on GPU with {\bf Arch-3}.

\subsection{Experiments on Synthetic Matrices}
Table~\ref{tab:synthetic_experiments} shows the results of the experiments on synthetic matrices. The first set of observations is as follows: (1) the proposed final implementation \textit{CodeGen-Hybrid} is $\approx31\times$ and $\approx118\times$ faster than \textit{CPU-SparsePerman} on {\bf Arch-1} and {\bf Arch-2}, respectively. (2) Furthermore, it is $8\times$ faster than \textit{GPU-SparsePerman} when they run on the same architecture {\bf Arch-3}. (3) With permanent ordering, partitioning, and using global memory in addition to the registers, \textit{CodeGen-Hybrid} is $1.74\times$ faster on average than \textit{CodeGen-PureReg}. 
 
\begin{figure*}[htbp]
    \centering
    \includegraphics[width=1\textwidth]{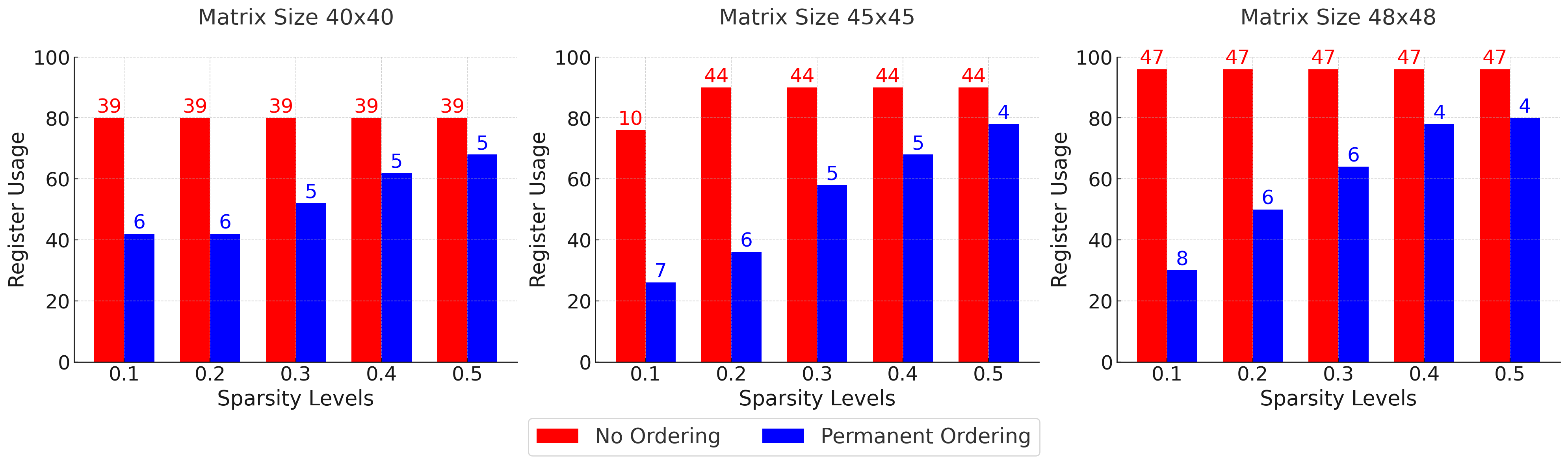}
    \caption{\small The number of registers used per thread~($y$-axes) to store $\xb$ values, i.e., $2 \times k$, for hybrid-memory
    kernel generation when permanent ordering is not applied~(red bars) and when it is applied~(blue bars) before partitioning the matrix with Algorithm~\ref{alg:partitioning}. The values over the bars show the number of columns performing only register updates, i.e., $c$, 
    also set by the partitioning algorithm.\looseness=-1}
    \label{fig:register_usage}
\end{figure*}

To analyze the impact of the permanent ordering, we removed it from the pipeline and used Algorithm~\ref{alg:partitioning} to partition the original matrix. Figure~\ref{fig:register_usage} shows the numbers of registers used per thread~(bars, $y$-axis), $2 \times k$, and columns with only register-updating kernels~(values on top of the bars), $c$, with and without ordering~(red and blue bars, respectively). The experiment shows that ordering significantly reduces register usage especially when the matrices are sparser, i.e., $p \leq 0.3$. However, even with ordering, the first few columns tend to have an entry among the last rows when the matrix gets denser. Hence, $k$ quickly becomes close to $n$. This is also reflected in performance; when $p$ increases Table~\ref{tab:synthetic_experiments} reports less speedup. For instance {\textit{CodeGen-Hybrid}} is $2.5\times$, $2.1\times$, $1.6\times$, $1.3\times$, and $1.2\times$ faster than {\textit{CodeGen-PureReg}}, respectively, for $p = 0.1, 0.2, 0.3, 0.4$, and $0.5$. This happens because the number of threads that can be launched with {\textit{CodeGen-Hybrid}} decreases when the density increases. However, for {\textit{CodeGen-PureReg}} the thread count is always the same at each table row since it depends only on $n$. 

We did not observe a meaningful decrease in speedups relative to the CPU-based implementations since they also get slower with density. For instance, the overall speedup of {\textit{CodeGen-Hybrid}} w.r.t. {\textit{CPU-SparsePerman}} on {\bf Arch-1} is $31.2\times$ whereas considering only the $p = 0.5$ experiments, the average speedup is $30.7\times$. However, although not drastically, the speedups relative to {\textit{GPU-SparsePerman}} decreases with increasing density; {\textit{CodeGen-Hybrid}} is $7.8\times$, $8.6\times$, $8.0\times$, $7.3\times$, and $6.9\times$ faster for $p = 0.1, 0.2, 0.3, 0.4$, and $0.5$, where the average speedup is $7.7\times$. The best speedup relative to {\textit{GPU-SparsePerman}} is obtained for $p = 0.2$. Lastly, there is no meaningful pattern in relative performances when $n$ increases. 

\subsection{Experiments on Real-Life Matrices}

Table~\ref{tab:real_life_experiments} shows the results of the experiments on real-life matrices. Overall, {\textit{CodeGen-Hybrid}} is $24.9\times$ and $4.9\times$ faster than {\textit{CPU-SparsePerman}} and {\textit{GPU-SparsePerman}}. In this experiment, the matrices can be classified into two with respect to the speedup obtained by {\textit{CodeGen-Hybrid}}; 
\begin{itemize} 
\item On $bcsstk01$, $mycielskian6$, and $mesh1e1$ the speedups are higher. Among these, $bcsstk01$ and $mesh1e1$ have mostly unique nonzero values. 
\item On the contrary, $curtis54$ and $bcspwr02$ are binary matrices, with all nonzero entries equal to $1$, and $d\_ss$ has more than half of the nonzeros as $1$ and $-1$. Furthermore, $d\_ss$ has only 40 unique nonzero values among 144.
\end{itemize}
Having the same values increases the probability of having a zero $\xb$ entry for many consecutive iterations. As stated in Section~\ref{sec:algos}, this is exploited by our CPU/GPU {\textit{SparsePerman}} implementations. For instance, the $prod$ computation in between lines 17--19 of Algorithm~\ref{alg:sparyser} and its contribution to $p$ at line 21 are skipped in $99.37\%$ of the iterations for $bcspwr02$, in $99.30\%$ of the iterations for $curtis54$, and in $99.21\%$ of the iterations for $d_{ss}$.
Note that even in the 0 case, the $\xb$ array is updated in both CPU/GPU {\textit{SparsePerman}} implementations. 

Even though {\textit{CodeGen-Hybrid}} does not exploit $0$s in $\xb$, it is still $2.1\times$ faster for real-life matrices. This is due to smart chunking and load distribution. There is almost no control divergence and load imbalance among the threads within a warp. In addition, the {\em achieved occupancy} during an execution, which measures the ratio of active warps to the max possible, exceeds $98\%$. This means that the warps within a block, as well as the blocks in the grid, are performing an almost perfectly balanced workload. All these make {\textit{CodeGen-Hybrid}} significantly faster than state-of-the-art CPU/GPU baselines. 

\subsection{Overhead of Kernel Generation}
The kernel generation process, i.e., the ordering, partitioning, creation of the codes shown in the listings,  etc., is fully automated. There is no manual intervention. There is a script that gets the input matrix, automatically generates the kernels, compiles them, builds the matrix-specific executable, runs it and outputs the permanent. The whole process until execution takes less than 2 seconds for all the matrices used in the experiments. Considering the minimum execution time of \textit{CodeGen-Hybrid} being at least 478 seconds on these matrices, the overhead is negligible. 

\begin{table}[htbp]
\caption{\small Experiments on real-life matrices across two different architectures. \textit{CPU-SparsePerman} is executed on {\bf Arch-1}. Nevertheless, it could not produce a result on $curtis54$ in 3 days.}\label{tab:real_life_experiments}
\small
\centering
\scalebox{0.83}{
\tabcolsep=0.09cm
\begin{tabular}{l||rrr||rrr}
& \multicolumn{3}{c|}
{\cellcolor{gray!20}$bcsstk01$} 
& \multicolumn{3}{c}
{\cellcolor{gray!20}$mycielskian6$}
\\
& Time & \#threads & Speed. & Time & \#threads & Speed. \\
\midrule
\textit{CPU-SparsePerman}  & 34202.45 & 112 & $38.7\times$ & 15123.73 & 112 & $31.7\times$  \\
\textit{GPU-SparsePerman}  & 7570.47 & 34560 & $8.6 \times$ & 3556.81 & 34560 & $7.5\times$ \\ 
\textit{CodeGen-Hybrid} & 882.94 & 110592 & $1.0\times$ & 477.75 & 96768 & $1.0\times$ \\ 
\bottomrule
& \multicolumn{3}{c|}
{\cellcolor{gray!20}$mesh1e1$} 
& \multicolumn{3}{c}
{\cellcolor{gray!20}$d\_ss$}
\\
& Time & \#threads & Speed. & Time & \#threads & Speed. \\
\midrule
\textit{CPU-SparsePerman} & 33013.44 & 112 & $38.4\times$ & 191395.10 & 112 & $8.9\times$\\
\textit{GPU-SparsePerman}  & 6044.82 & 41472 & $7.0\times$ & 44732.08 & 34560 & $2.1\times$\\ 
\textit{CodeGen-Hybrid} & 860.19 & 124416 & $1.0\times$ & 21448.17 & 138240 & $1.0\times$ \\
\bottomrule
& \multicolumn{3}{c|}
{\cellcolor{gray!20}$curtis54$} 
& \multicolumn{3}{c}
{\cellcolor{gray!20}$bcspwr02$}
\\
& Time & \#threads & Speed. & Time & \#threads & Speed. \\
\midrule
\textit{CPU-SparsePerman}  & Timeout & 112 & $-$& 9239.81 & 112 & $7.0\times$\\
\textit{GPU-SparsePerman} & 126047.62 & 34560 & $2.6\times$& 2424.19 & 41472& $1.8\times$ \\ 
\textit{CodeGen-Hybrid} & 47841.18 & 124416 & $1.0\times$& 1317.42 & 138240& $1.0\times$ \\
\end{tabular}
}
\end{table}

\section{Related Work}

 One of the observations exploited in the literature for sparse matrices is that if the number of perfect matchings in the corresponding bipartite graph is small, then the permanent can be computed by enumerating all these matchings~\cite{mittal01}. Using CRS/CCS for sparse matrix permanents is also suggested in the same study. Another approach to exploit extreme sparsity is that one can decompose the matrices when a row/column having at most four nonzero exists~\cite{forbert03}. In the extreme case with a row/column containing a single nonzero, $n$ can be reduced by one by removing that row/column and multiplying the output by the value of the removed nonzero. A similar, tree-based decomposition is also proposed in~\cite{liang06}. Unfortunately, there are not many real-life sparse matrices that can benefit from these techniques whereas the approach proposed in this work is general and can be applied to all sparse matrices. In addition, even when decomposition is applied, the matrices will contain at least 5 nonzeros per row/column and still be sparse. Hence, \textit{CodeGen-Hybrid} can be used efficiently on them. 

In terms of parallelization, computing sparse matrix permanents is also studied in the literature~\cite{kaya19}. We used the base algorithm along with the proposed optimizations in \textit{CPU-SparsePerman} and efficiently ported the algorithm to GPUs, which we also used as a baseline, \textit{GPU-SparsePerman}. We are not aware of a study that utilizes GPUs to compute the permanents of sparse matrices. 

\section{Conclusion and Future Work}
In this work, we proposed two code-generation techniques for computing sparse matrix permanents. The latter achieved a GPU utilization exceeding $98\%$ in certain scenarios which made it significantly faster than state-of-the-art CPU/GPU baselines. This utilization level was made possible by efficiently leveraging GPU registers, reducing memory access latencies, and allowing the schedulers to almost always find an available warp to schedule onto SM cores. Moreover, we exploited a key structure in the state-of-the-art algorithms, Gray code, and efficiently reduced the number of registers used by up to $70\%$. This optimization allowed a larger number of threads to concurrently remain active within the grid. 

It is straightforward to extend code generation to multiple GPUs/nodes since permanent computation is pleasingly parallel. In the future, we aim to utilize shared memory in which to spill our registers intelligently, alongside global memory, for even faster access. This will require a multi-level ordering/partitioning algorithm. In addition, as the experiments on real matrices show, exploiting the sameness of nonzero values is extremely promising since this makes the $\xb$ array contain at least one zero in many consecutive iterations. We believe that this is an interesting research avenue to make \textit{CodeGen-Hybrid} even more efficient. 

\bibliographystyle{IEEEtran}
\bibliography{permanent.bib}

\begin{thebibliography}{10}
\providecommand{\url}[1]{#1}
\csname url@samestyle\endcsname
\providecommand{\newblock}{\relax}
\providecommand{\bibinfo}[2]{#2}
\providecommand{\BIBentrySTDinterwordspacing}{\spaceskip=0pt\relax}
\providecommand{\BIBentryALTinterwordstretchfactor}{4}
\providecommand{\BIBentryALTinterwordspacing}{\spaceskip=\fontdimen2\font plus
\BIBentryALTinterwordstretchfactor\fontdimen3\font minus \fontdimen4\font\relax}
\providecommand{\BIBforeignlanguage}[2]{{%
\expandafter\ifx\csname l@#1\endcsname\relax
\typeout{** WARNING: IEEEtran.bst: No hyphenation pattern has been}%
\typeout{** loaded for the language `#1'. Using the pattern for}%
\typeout{** the default language instead.}%
\else
\language=\csname l@#1\endcsname
\fi
#2}}
\providecommand{\BIBdecl}{\relax}
\BIBdecl

\bibitem{aaronson11}
S.~Aaronson and A.~Arkhipov, ``The computational complexity of linear optics,'' in \emph{Proceedings of the Forty-third Annual ACM Symposium on Theory of Computing}, ser. STOC '11.\hskip 1em plus 0.5em minus 0.4em\relax New York, NY, USA: ACM, 2011, pp. 333--342.

\bibitem{LUNDOW2022110990}
\BIBentryALTinterwordspacing
P.~Lundow and K.~Markström, ``Efficient computation of permanents, with applications to boson sampling and random matrices,'' \emph{Journal of Computational Physics}, vol. 455, p. 110990, 2022. [Online]. Available: \url{https://www.sciencedirect.com/science/article/pii/S0021999122000523}
\BIBentrySTDinterwordspacing

\bibitem{10.1093/nsr/nwy079}
J.~Wu, Y.~Liu, B.~Zhang, X.~Jin, Y.~Wang, H.~Wang, and X.~Yang, ``{A benchmark test of boson sampling on Tianhe-2 supercomputer},'' \emph{National Science Review}, vol.~5, no.~5, pp. 715--720, 07 2018.

\bibitem{Brod15}
\BIBentryALTinterwordspacing
D.~J. Brod, ``Complexity of simulating constant-depth bosonsampling,'' \emph{Phys. Rev. A}, vol.~91, p. 042316, Apr 2015. [Online]. Available: \url{https://link.aps.org/doi/10.1103/PhysRevA.91.042316}
\BIBentrySTDinterwordspacing

\bibitem{Narahara2013ApplicationOP}
M.~Narahara, K.~Tamaki, and R.~Yamada, ``Application of permanents of square matrices for dna identification in multiple-fatality cases,'' \emph{BMC Genetics}, vol.~14, pp. 72 -- 72, 2013.

\bibitem{PhysRevE.77.016706}
\BIBentryALTinterwordspacing
Y.~Huo, H.~Liang, S.-Q. Liu, and F.~Bai, ``Computing monomer-dimer systems through matrix permanent,'' \emph{Phys. Rev. E}, vol.~77, p. 016706, Jan 2008. [Online]. Available: \url{https://link.aps.org/doi/10.1103/PhysRevE.77.016706}
\BIBentrySTDinterwordspacing

\bibitem{kilic2007}
E.~Kılıç and D.~Tasci, ``On the permanents of some tridiagonal matrices with applications to the fibonacci and lucas numbers,'' \emph{Rocky Mountain Journal of Mathematics - ROCKY MT J MATH}, vol.~37, 12 2007.

\bibitem{balakrishnan2007}
N.~Balakrishnan, ``Permanents, order statistics, outliers, and robustness,'' \emph{Revista Matematica Complutense, 2007}, vol.~20, 03 2007.

\bibitem{Merschen11}
J.~Merschen, ``Nash equilibria, gale strings, and perfect matchings,'' Ph.D. dissertation, The London School of Economics and Political Science (LSE), 1 2011.

\bibitem{lovász2009matching}
L.~Lov{\'a}sz and M.~Plummer, \emph{Matching Theory}, ser. AMS Chelsea Publishing Series.\hskip 1em plus 0.5em minus 0.4em\relax AMS Chelsea Publishing Series, 2009.

\bibitem{253457}
A.~Ben-Dor and S.~Halevi, ``Zero-one permanent is not=p-complete, a simpler proof,'' in \emph{[1993] The 2nd Israel Symposium on Theory and Computing Systems}, 1993, pp. 108--117.

\bibitem{DUFOSSE2022130}
F.~Dufossé, K.~Kaya, I.~Panagiotas, and B.~Uçar, ``Scaling matrices and counting the perfect matchings in graphs,'' \emph{Discrete Applied Mathematics}, vol. 308, pp. 130--146, 2022, {C}omb. Opt.d ISCO 2018.

\bibitem{LittlewoodRichardson1934}
D.~E. Littlewood and A.~R. Richardson, ``Group characters and algebras,'' \emph{Philosophical Transactions of the Royal Society A}, vol. 233, no. 721--730, pp. 99--124, 1934.

\bibitem{Littlewood1950}
D.~E. Littlewood, \emph{The Theory of Group Characters and Matrix Representations of Groups}, 2nd~ed.\hskip 1em plus 0.5em minus 0.4em\relax Oxford University Press, 1950, reprinted by AMS, 2006, p. 81.

\bibitem{valiant79a}
\BIBentryALTinterwordspacing
L.~Valiant, ``The complexity of computing the permanent,'' \emph{Theoretical Computer Science}, vol.~8, no.~2, pp. 189--201, 1979. [Online]. Available: \url{https://www.sciencedirect.com/science/article/pii/0304397579900446}
\BIBentrySTDinterwordspacing

\bibitem{ryser63}
H.~Ryser, \emph{Combinatorial Mathematics}.\hskip 1em plus 0.5em minus 0.4em\relax Mathematical Association of America, 1963.

\bibitem{nijenhuis78}
A.~Nijenhuis and H.~S. Wilf, \emph{Combinatorial Algorithms}.\hskip 1em plus 0.5em minus 0.4em\relax Academic Press, 1978.

\bibitem{kaya19}
K.~Kaya, ``Parallel algorithms for computing sparse matrix permanents,'' \emph{Turkish Journal of Electrical Engineering and Computer Sciences}, vol.~27, pp. 4284--4297, 2019.

\bibitem{NVIDIA}
\BIBentryALTinterwordspacing
N.~Corporation, \emph{CUDA C++ Programming Guide}, accessed: 2024-09-13. [Online]. Available: \url{https://docs.nvidia.com/cuda/cuda-c-programming-guide/index.html}
\BIBentrySTDinterwordspacing

\bibitem{davis2011ufl}
\BIBentryALTinterwordspacing
T.~A. Davis and Y.~Hu, ``The university of florida sparse matrix collection,'' \emph{ACM Transactions on Mathematical Software}, vol.~38, no.~1, pp. 1:1--1:25, dec 2011. [Online]. Available: \url{https://doi.org/10.1145/2049662.2049663}
\BIBentrySTDinterwordspacing

\bibitem{mittal01}
R.~Mittal and A.~Al-Kurdi, ``Efficient computation of the permanent of a sparse matrix,'' \emph{International Journal of Computer Mathematics}, vol.~77, no.~2, pp. 189--199, 2001.

\bibitem{forbert03}
H.~Forbert and D.~Marx, ``Calculation of the permanent of a sparse positive matrix,'' \emph{Computer Physics Communications}, vol. 150, no.~3, pp. 267--273, 2003.

\bibitem{liang06}
H.~Liang, S.~Huang, and F.~Bai, ``A hybrid algorithm for computing permanents of sparse matrices,'' \emph{Appl. Math. Comput.}, vol. 172, no.~2, p. 708–716, jan 2006.

\end{thebibliography}

\begin{IEEEbiography}
[{\includegraphics[width=1in,height=1.25in,clip,keepaspectratio]{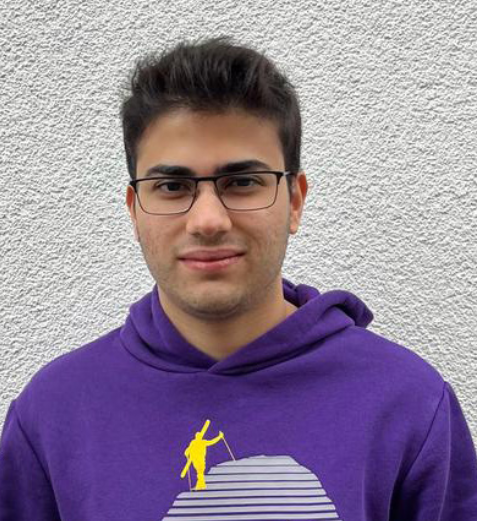}}]{Deniz Elbek} is an undergraduate student in the Computer Science and Engineering program at Sabanci University. His areas of research focus on High Performance Computing, Parallel Processing and Algorithms, Parallel Computer Architecture, and Hardware-Software Interface.
\end{IEEEbiography}

\begin{IEEEbiography}[{\includegraphics[width=1in,height=1.25in,clip,keepaspectratio]{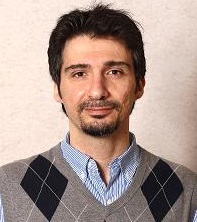}}]{Kamer Kaya} is an Associate Professor at Sabanci University. His research interests are Parallel Algorithms, High Performance Computing, and Graph and Sparse Matrix Algorithms.
\end{IEEEbiography}

\end{document}